\newtheorem{lemma}{Lemma}
\newtheorem{theorem}{Theorem}
\newtheorem{proposition}{Proposition}
\newcommand{\N}{\mathbb{N}}
\newcommand{\pf}{\mathop{\mathrm{pf}}}
\newcommand{\coef}{\mathop{\mathrm{coef}}}
\newcommand{\sgn}{\mathop{\mathrm{sgn}}}
\newcommand{\name}[1]{\textsc{#1}}
\newenvironment{namedefn}[3]{
\par\addvspace{0.4\baselineskip}\fbox{%
\begin{minipage}[t]{0.9\linewidth}%
\begin{tabular}{p{18mm}p{120mm}}
    \multicolumn{2}{l}{{\name{#1}}} \\
        \textsl{Input:} & {#2} \\ 
        \textsl{Question:} & {#3} \\ 
   \end{tabular}
\end{minipage}}\par\addvspace{0.4\baselineskip}}
\begin{document}

\title{Parameterized Rural Postman Problem}

\author{
Gregory Gutin\footnote{Corresponding author, email: gutin@cs.rhul.ac.uk. Research of GG was supported by Royal Society Wolfson Research Merit Award.}, Magnus Wahlstr{\"o}m\\
\small  Royal Holloway, University of London\\[-3pt]
\small Egham, Surrey TW20 0EX, UK\\[-3pt]
\small \texttt{gutin@cs.rhul.ac.uk|Magnus.Wahlstrom@rhul.ac.uk}
\and Anders Yeo\\
\small Singapore University of Technology and Design \\[-3pt]
\small 20 Dover Drive, Singapore 138682 \\[-3pt]
\small \texttt{andersyeo@gmail.com}
}
\date{}
\maketitle
\begin{abstract}
\noindent
The Directed Rural Postman Problem (DRPP) can be formulated as follows: given a strongly connected directed multigraph $D=(V,A)$ with nonnegative integral weights on the arcs, a subset $R$ of $A$ and a nonnegative integer $\ell$, decide whether $D$ has a closed directed walk containing every arc of $R$ and of total weight at most $\ell$. Let $k$ be the number of weakly connected components in the the subgraph of $D$ induced by $R$. 
Sorge et al. (2012) ask whether the DRPP is fixed-parameter tractable (FPT) when parameterized by $k$, i.e., whether there is an algorithm of running time $O^*(f(k))$ where $f$ is a function of $k$ only and the $O^*$ notation suppresses polynomial factors.
Sorge et al. (2012) note that this question is of significant practical relevance and has been open for more than thirty years. Using an algebraic approach, we prove that DRPP has a randomized algorithm of running time $O^*(2^k)$ when $\ell$ is bounded by a polynomial in the number of vertices in $D$.
We also show that the same result holds  for the undirected version of DRPP, where $D$ is a connected undirected multigraph.

\end{abstract}

\section{Introduction}

In this paper, all walks in directed multigraphs (and their special types: trails, paths and cycles) are directed. For directed multigraphs, we mainly follow terminology and notation of \cite{BanGut}. A  walk $W$ is {\em closed} if the initial and terminal vertices of $W$ coincide. A {\em trail} is a walk without repetition of arcs; a {\em path} is a trail without repetition of vertices; a {\em cycle} is a closed trail with no repeated vertices apart from initial and terminal ones.
A directed multigraph $G$ is {\em weakly connected} ({\em strongly connected}, respectively) if there is a path between any pair of vertices in the underlying undirected graph of $G$ (there are paths in both directions between any pair of vertices of $G$, respectively). A {\em weakly connected component} of $G$ is a maximal weakly connected induced subgraph of $G$.

A closed trail in directed or undirected graph $G$ is called {\em Eulerian} if it includes all edges and vertices of $G$; a graph containing an Eulerian trail, is called Eulerian. The {\em balance} of a vertex $v$ of a  directed multigraph $H$ is the in-degree of $v$ minus the out-degree of $v$. 
It is well-known that an undirected (directed, respectively) multigraph $G$ is Eulerian if it is connected and each vertex is of even degree (weakly connected and the balance of every vertex is zero, respectively) \cite{BanGut,Pap1976}. Note that every Eulerian directed multigraph is strongly connected. For directed multigraphs, we will often use the term {\em connected} instead of {\em weakly connected}.


The {\sc Chinese Postman Problem} (CPP) can be formulated as follows: given a connected multigraph $G$ with nonnegative integral weights on the edges, find a closed walk of minimum total weight which contains each edge of $G$ at least once. 
CPP for both directed and undirected multigraphs is polynomial time solvable \cite{Pap1976}.

In this paper, we study the following generalization of {\sc Directed CPP}:

\begin{center}
 \begin{namedefn}%
   {{\sc Directed Rural Postman Problem (DRPP)}}%
   {A strongly connected directed multigraph $D=(V,A)$,  \newline a subset $R$
   of arcs of $D$,  a weight function $\omega: A \rightarrow \N,$ \newline and 
   an integer $\ell$.}%
   {Is there a closed walk on $D$ containing every arc of $R$\newline
   with the total weight at most $\omega(R)+\ell$, where $\omega(R)$\newline  is the total weight of arcs in $R$?}%
 \end{namedefn}
\end{center}
We also study the {\sc Undirected Rural Postman Problem (URPP)}, where $D$ is a connected undirected multigraph.

Practical applications of RPP include garbage collection, mail delivery and snow removal \cite{AsGo1995,Dro2000,EiGeLa1995}. Both undirected and directed cases of RPP are NP-hard by a reduction from the {\sc Hamilton Cycle Problem} \cite{LeRi1976}  (see also \cite{BeNiSoWe}). 

We will study the parameterized complexity of {\sc DRPP}. A parameterized problem $\Pi \subseteq \Sigma^*\times \mathbb{N}$ is called {\em fixed-parameter tractable (FPT)} with respect
to a parameter $k$ if $(x,k) \in \Pi$ can be decided by an algorithm of running time $f(k) |x|^{O(1)}$, where $f$ is a function only depending on $k.$ 
(For background and terminology on parameterized complexity we refer the reader to the monographs~\cite{DowneyFellows99,FlumGrohe06,Niedermeier06}.)

Consider {\sc DRPP} and let $k$ be the number of weakly connected components of $D[R]$, 
where $D[R]$ is the subgraph of $D$ induced by $R$. 
Sorge {\em et al.} \cite{SoBeNiWe2012} noted that the complexity of {\sc DRPP} parameterized by $k$ 
``is a more than thirty years open ...  question with significant practical relevance.'' 
Sorge {\em et al.} \cite{SoBeNiWe2012}  commented that ``$k$ is presumably small in a number of applications \cite{Fre1977,Fre1979}'' and Sorge \cite{Sor2013} remarked
 that in planning for snow plowing routes for Berliner Stadtreinigung, $k$ is between 3 and 5.
Lately, the question whether {\sc DRPP} parameterized by $k$ is FPT was raised in \cite{BeNiSoWe,DoMoNiWe2013,Kul,Sor2013,SoBeNiWe2011}.

Frederickson \cite{Fre1977,Fre1979} obtained a polynomial-time algorithm for {\sc DRPP} when $k$ is constant. 
However, $k$ influences the degree of the polynomial in the running time of Frederickson's algorithm.
Dorn {\em et al.} \cite{DoMoNiWe2013} proved that the {\sc DRPP} is FPT when parameterized by the number $a$ of arcs not from $R$ 
in a solution of the problem. However, $k\le a$ and according to Sorge {\em et al.} \cite{SoBeNiWe2012} ``it is reasonable to assume that $k$ is much smaller [than $a$] in practice''. 
Sorge {\em et al.} \cite{SoBeNiWe2011} proved that the {\sc DRPP} is FPT when parameterized by $k+b$, where $b$ is the sum of the absolute values of the balances of vertices in $G[R]$.

In the next section, we will prove that {\sc DRPP} parameterized by $k$ admits a randomized algorithm of running time $O^*(2^k)$ provided $\ell$ is bounded by a polynomial in the number of vertices in $D$. In fact, we prove this result for 
another problem called {\sc Eulerian Extension} which is equivalent to {\sc DRPP}.

It is likely that in many applications of {\sc DRPP} the weights are bounded by a polynomial in the number of vertices in the multigraph and so our result can be applied. Consider the following examples. H{\" o}hn {\em et al.} \cite{HoJaMe2012} introduced the following problem equivalent to a problem in scheduling and proved that the problem is NP-complete. Given a directed multigraph $D =(V ,A)$ with vertices $V\subset \mathbb{R}^+_0\times \mathbb{R}^+_0$, determine whether there exists a collection $H$ of pairs of vertices of the type $(u,v)$ with $u_i\ge v_i$, where $u=(u_1,u_2),\ v=(v_1,v_2)$, such that $D+ H$ is an Euler directed multigraph. Clearly, this problem is a special case of {\sc DRPP} with all arcs being of weight 0 and 1: set $R=A$, assign weight 0 to pairs of vertices that we can add to $H$ and weight 1 to all other pairs of vertices in $V$. Golovnev {\em et al.} \cite{GoKuMi2014} obtained a reduction from the {\sc Shortest Common Superstring} problem to {\sc DRPP} parameterized by $k$ and designed a faster exact algorithm for   {\sc Shortest Common Superstring} with bounded length strings using our main result, Theorem \ref{thm:main2}. 

Sorge {\em et al.} \cite{SoBeNiWe2012} remarked that the complexity question ``extends to the presumably harder undirected case of {\sc Rural Postman}.'' We show that the {\sc DRPP}  algorithmic result holds also for URPP.

Henceforth, for a positive integer $t$, $[t]$ will stand for $\{1,\dots ,t\}.$  In an attempt to solve DRPP,
Sorge {\em et al.} \cite{SoBeNiWe2012} introduced and extensively studied the following matching problem.

\begin{center}
 \begin{namedefn}%
   {{\sc  Conjoining Bipartite Matching (CBM)}}%
   {A bipartite graph $B$ with nonnegative weights on its  edges,  \newline
    a partition $V_1\cup \ldots \cup V_t$ of vertices of $B$,   a number $\ell$,  and \newline
    a graph $([t],F)$.}%
   {Decide whether $B$ has a perfect matching $M$ of total weight\newline at most $\ell$, such 
    that for each $ij\in F$ there is an edge in $M$\newline    with one end-vertex in $V_i$ 
 and the other in $V_j.$}%
 \end{namedefn}
\end{center}

Sorge {\em et al.}  \cite{SoBeNiWe2012} proved that CBM parameterized by $|F|$ is FPT-equivalent to \textsc{DRPP} parameterized by $k$ (i.e., if either of the two parameterized problems is FPT then so is the other one). 
In Section \ref{sec:match} we prove\footnote{This result was independently derived by Marx and Pilipczuk \cite{MaPi}, with a worse dependency on $|F|$: $O^*(2^{O(|F|)})$.} 
for completeness that the same tools apply to CBM, i.e., that CBM (as well as its natural variant, {\sc Conjoining General Matching} where the graph does not have to be bipartite) admits a randomized algorithm  of running time $O^*(2^{|F|})$ provided $\ell$ is bounded by a polynomial in $n$, the number of vertices in $B$. Clearly, by the reductions of   \cite{SoBeNiWe2012}, our result on \textsc{DRPP} parameterized by $k$ implies that  CBM parameterized by $|F|$ is randomized FPT (provided $\ell$ is bounded by a polynomial in $n$), but the reductions lead to an algorithm of running time $O^*(2^{O(|F|\log |F|)}).$ 

We conclude the paper in Section \ref{sec:dis} by stating some natural open problems.


\section{Parameterized Eulerian Extension}\label{sec:EE}

Let $G=(V,R)$ be a directed multigraph and let $\omega: V\times V \rightarrow \mathbb{N}$ be a weight function. 
A multiset $E$ over $V\times V$  is an {\em Eulerian extension (EE)} for $G$ if $G+E=(V,R + E)$ is an Eulerian directed multigraph, where $R + E$ is the union of two multisets, i.e., the number of copies of each arc $uw$ in the union is the sum of the numbers of copies of  $uw$ in $R$ and $E$. In the {\em Optimization Version} of the {\sc Eulerian Extension} problem, we are to find a minimum weight EE. However, in this paper we will deal with the decision version of this problem.
\begin{center}
 \begin{namedefn}%
   {{\sc Eulerian Extension}}%
   {A directed multigraph $G=(V,R)$ with no isolated vertices,\newline
   a weight function $\omega: V\times V \rightarrow \mathbb{N}$,  and 
   an integer $\ell$.}%
   {Is there an EE of total weight of most $\ell$?}%
 \end{namedefn}
\end{center}
It is not hard to see that {\sc DRPP} and {\sc Eulerian Extension} are equivalent. We can reduce {\sc DRPP} to {\sc Eulerian Extension} by first defining the weight function as follows: for every arc $uv\in A$, $\omega(u,v)$ is set to the minimum weight of an arc from $u$ to $v$ in $A$, for every arc $uv$ for which there is no arc from $u$ to $v$ in $A$, $\omega(u,v)$ is set to $\ell+1$. This defines a complete digraph $K_V$ on vertex set $V$.
Now replace $\omega(u,v)$ for every $(u,v)\in V\times V$ by the minimum weight of a path from $u$ to $v$ in $K_V$. Finally, delete from $V$ all vertices which are not incident to an arc of $R$.  This implies that $G$ has no isolated vertices.
We can reduce  {\sc Eulerian Extension} to {\sc DRPP} by including in the arc set of $D$ only arcs of weight at most $\ell$.

This equivalence was observed in~\cite{DoMoNiWe2013}. Note that $k$ is the number of connected components of $G$ and the reduction from {\sc DRPP} to {\sc Eulerian Extension} can be done in polynomial time in the number of vertices of $D$. 

We may assume that $G$ has the following properties (after applying polynomial-time reduction rules
given by Dorn et al. \cite{DoMoNiWe2013} and Sorge et al. \cite{SoBeNiWe2012}):
\begin{description}
\item[P1] Every vertex $v$ in $G$ has balance 1, 0 or $-1$;
\item[P2] For every ordered triple $u,v,w$ of distinct vertices of $V$, we have $\omega(u,w)\le \omega(u,v)+\omega(v,w)$;
\item[P3]  Every EE  can be partitioned into pairwise arc-disjoint cycles and paths such that
every path starts at a vertex of balance 1 and terminates at a vertex of balance $-1$, and, moreover, 
every vertex of nonzero balance is an end-vertex of a unique such path. 
\end{description}
Property (P1) is achieved as follows. Let $v$ be a vertex of balance $b>1$. 
We add an extra vertex $v'$ and the arc $vv'$ to $R$. The weight function $\omega: V\times V \rightarrow \mathbb{N}$ 
treats $v'$ the same as $v$ and sets $\omega(v,v')=\omega(v',v)=0$. Note that if $E$ is an EE in the new graph then
by contracting $v$ and $v'$ we get a solution of the same weight in $G$. Analogously an EE in $G$ can be transformed into an
EE for the new graph. So we have decreased the balance of $v$ by one, by adding a vertex $v'$ of balance $1$. 
Similarly, we can increase balances of vertices with balances smaller than $-1$, by adding vertices of balance $-1$.


To achieve (P2) we can replace the weight of each arc $uv\in V\times V$ by the weight of a minimum weight path from $u$ to $v$ 
(as in the reduction from {\sc DRPP} to {\sc Eulerian Extension} above). 

Let $X_1$ be all vertices in $G$ of balance $1$ and $X_{-1}$ be all vertices in $G$ of balance $-1$ and let $M$ be any perfect matching from $X_{-1}$ to $X_1$ in $V \times V$.
If $E$ is an EE of $G$, then $R + E$ is balanced, implying that {$M + E$ is balanced (by (P1)) and therefore $M + E$ can be decomposed into arc-disjoint cycles.
Removing $M$ from these cycles, gives us the paths and cycles in (P3). 
Note that the reductions leading to Properties (P1) and (P2) keep the number of connected components of $G$ unchanged (see also \cite{DoMoNiWe2013,SoBeNiWe2012}).

In the rest of this section, we will show how to solve the \textsc{Eulerian Extension} problem in time $O^*(2^k)$ by a randomized algorithm provided $\ell$ is bounded by a polynomial in the number of vertices in $G$.
The solution uses algebraic methods and dynamic programming; throughout, all algebraic operations are performed over fields of characteristic two. 

Let $k$ be the number
of connected components of $G$ and let $V=V_1 \cup \ldots \cup V_k$ be the partition of $V$ induced by the
connected components of $G$. For shorthand, we will refer to
this set of components as $[k]$, identifying the component $G[V_i]$ with
the index $i$.

\subsection{The ``no cycles'' transformation}

For every vertex $v \in V$, add to $G$ two new vertices $v', v''$ and arcs $vv',v'v,vv'',v''v,v'v''$. 
Set the weights of all these arcs to $0$. Set $\omega(v'',v')=0$,
and for any other arc treat $v'$ and $v''$ the same as $v$ (e.g., the arc $v'u$ for $u \notin \{v,v',v''\}$ 
gets weight $\omega(v',u)=\omega(v,u)$). 
Note that this transformation preserves Property (P2).

\begin{lemma}
\label{lemma:nocycles}
After the above transformation, there exists a minimum weight EE which can
be partitioned into paths with initial vertex of balance 1 and terminal vertex of balance $-1$ and in which
every vertex of nonzero balance is an end-vertex of a unique such path.
\end{lemma}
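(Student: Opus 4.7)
The plan is to start from any minimum weight EE $E$ for the transformed graph $G$, apply the decomposition guaranteed by (P3), and then iteratively modify $E$ (without changing its total weight) so that the resulting decomposition consists solely of paths.

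First I note that each new vertex $v'$ has balance $-1$ and each $v''$ has balance $+1$ in the transformed $G$, so by (P3) some path of the decomposition ends at each $v'$ and some path starts at each $v''$. Next, the triangle inequality (P2) together with $\omega(v,v')=0$ and the rule that $v'$ is treated like $v$ yields $\omega(x,v')=\omega(x,v)$ for every $x\notin\{v,v',v''\}$. So I may replace a terminal arc $xv'$ of the path ending at $v'$ by the pair $xv,\,vv'$ in $E$, preserving all balances, connectivity, and the total weight. Applying this reduction (and its symmetric counterpart at $v''$) for every $v$, I can assume without loss of generality that every path ending at some $v'$ enters through the zero-weight gadget arc $vv'$, and every path starting at $v''$ leaves through $v''v$. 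In particular, every original vertex $v\in V$ now appears on at least one such path.

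I then absorb the cycles of the decomposition one at a time. For a cycle $C$, I look for a vertex $w$ on $C$ that also lies on some path $P$, and splice $C$ into $P$ at $w$: walk along $P$ to $w$, traverse $C$ back to $w$, then continue along $P$. Arc-disjointness is preserved (since $C$ and $P$ were arc-disjoint in the given decomposition), the endpoints of $P$ are unchanged, and the cycle count drops by one. If $C$ passes through any original vertex $v\in V$, then by the normalisation step $v$ lies on the path terminating at $v'$, so we take $w=v$. Otherwise $C$ uses only gadget vertices and arcs; it must then contain some $v'$ or $v''$, which is itself the endpoint of the corresponding path, and we splice there.

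The main subtlety I expect will be ensuring that such a common vertex exists for cycles that live entirely inside one gadget and touch no original vertex, for instance the two-arc cycle $v'\to v''\to v'$ made up of the gadget arc $v'v''$ and a zero-weight copy of $v''v'$ in $E$; this is exactly the case where the endpoint normalisation of the second step is used, since $v'$ is guaranteed to be an endpoint of a path. Once all cycles have been absorbed, the decomposition consists of trails, each from a balance-$1$ vertex to a balance-$(-1)$ vertex, with each nonzero-balance vertex occurring as endpoint of a unique trail, witnessing the claimed form of a minimum weight EE.
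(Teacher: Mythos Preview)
Your approach differs genuinely from the paper's. The paper starts from a minimum-weight EE $E'$ for the \emph{pre-transformation} graph $G'$, argues (via minimality and a forest argument) that the cycles in its (P3)-decomposition admit a system of distinct representatives, and then builds an EE for $G''$ by rerouting each cycle $C$ with representative vertex $v$ into a path from $v''$ to $v'$; this directly yields simple paths. You instead work entirely inside the transformed graph $G''$, normalize the terminal/initial arcs so that every original vertex is hit by some path, and then splice cycles into paths at shared vertices. Your route avoids the distinct-representatives argument and is arguably more elementary.

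There is, however, a gap you should close. Both your normalization step (replacing $xv'$ by $xv,\,vv'$ can repeat $v$ if $v$ already occurred earlier on the path) and your splicing step (inserting $C$ into $P$ at $w$ visits $w$ twice) produce \emph{trails}, not simple paths, and you explicitly conclude with trails. The lemma, under the paper's definitions, asks for paths. This is easy to repair: once you have arc-disjoint trails with the prescribed endpoints, replace each trail from $u$ to $w$ by a minimum-weight simple path visiting the same set of components (such a path exists with no greater weight by (P2)). The resulting multiset is still an EE of the same minimum weight---vertex balances and inter-component connectivity are preserved---and now decomposes into simple paths as required. With that shortcutting step appended, your argument is complete.
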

\begin{proof}
Let $G'$ and $G''$ denote $G$ before and after the above transformation. Let $E'$ be a minimum weight EE for $G'$ satisfying Property (P3) (i.e.,
$E'$ can be partitioned into a collection $Q$ of paths and cycles, where the paths start from a vertex of balance 1 and terminate at a vertex of balance $-1$), 
and no subset of $E'$ is an EE for $G'$. Let $Q_C$ be the set of cycles in $Q$. We first argue that we can pick a set of \emph{distinct representatives} for $Q_C$, 
i.e, for every cycle $C=v_1\ldots v_pv_1$ in $Q_C$, we can select a distinct vertex $v_i$ on $C$. 

For this, first observe that by the minimality of $E'$, $G'+(E' \setminus C)$ must be non-Eulerian for every $C \in Q_C$. 
Since removing a cycle does not affect the vertex balances, we find that every cycle $C \in Q_C$ must perform some essential connectivity work, i.e., for every $C \in Q_C$ there is a pair  $u,v\in C$ of vertices such that $u$ and $v$ are in different connected components of $G'+(E' \setminus C)$.
Select one such pair $e_C=\{u,v\}$ for every $C \in Q_C$, and consider the undirected graph $H$ with edge set $L=\{e_C: C \in Q_C\}$ (the vertices of $H$ are all end-vertices of $L$). 
Clearly, $H$ is acyclic, since otherwise one of the pairs $e_C$ would not be essential after all. Thus $H$ is a forest.
By orienting every tree of $H$ away from an arbitrarily selected root, we produce a distinct representative $v$ for every edge $uv$ of $H$,
which implies a system of distinct representatives for $Q_C$. 

We now construct a minimum weight EE $E''$ for $G''$ as follows. 
First, for every $C=v_1\ldots v_p v_1\in Q_C$, let $v(C)$ be the appointed representative, say $v(C)=v_1$. 
We create a path $P$ in $G''$ from $v_1''$ to $v_1'$, by replacing the first occurrence of $v_1$ in $C$ by $v_1''$ and the second by $v_1'$.
We add $P$ to $E''$; note that the weight of $P$ for $G''$ equals the weight of $C$ for $G'$. 
Second, copy all paths $P$ of $Q$ to $E''$.
Finally, for any pair $v', v''$ of $G''$ which is still not balanced by $E''$, we add the arc $v''v'$; recall that this arc is of weight zero.
Thus $E''$ has the same weight as $E'$. 
It is easy to see that $G''+E''$ is weakly connected and that all its vertices are balanced; thus $E''$ is an EE for $G''$. 
Property (P3) for $E'$ implies the final properties of $E''$ for $G''$. 

In the other direction (to show optimality and correctness of the transformation), let $E''$ be an (arbitrary) minimum weight EE for $G''$. 
Discard every pair of vertices $v'',v'$ such that $v''v' \in E''$, and for every other arc incident on a vertex $v'$ or $v''$, transfer 
the arc to the corresponding vertex $v$. This creates an EE for $G'$, since both vertex balance and weak connectivity are preserved.
\end{proof}

The following lemma is an easy observation.

\begin{lemma}
Suppose that a minimum weight EE $E$ is partitioned into paths as above. Let $P$ be
a path of the partition, starting at a vertex $u$, terminating at a vertex
$v$, and passing through a set $I \subseteq [k]$ of connected components. Let $P'$
be a minimum weight path from $u$ to $v$, subject to the constraint that $P'$ is
incident to each component $i \in I$. Then a multiset $E'$ obtained from $E$ by replacing the arcs  of $P$ by the arcs of $P'$,
is a minimum weight EE, too.
\end{lemma}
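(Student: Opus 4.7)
The plan is to show three things: (i) every vertex of $G+E'$ has balance zero, (ii) $G+E'$ is weakly connected (so $E'$ is an EE), and (iii) $\omega(E') \le \omega(E)$, which together with minimality of $E$ gives $\omega(E')=\omega(E)$.

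Step (i) is immediate. Since $P$ and $P'$ are both paths from $u$ to $v$, each contributes $+1$ to the balance at $u$, $-1$ at $v$, and $0$ elsewhere; hence replacing $P$ by $P'$ preserves the balance at every vertex. Because $G+E$ is Eulerian, $G+E'$ is balanced as well. For (iii), note that $P'$ is, by assumption, a path of minimum weight from $u$ to $v$ incident to every component in $I$, and $P$ itself is one such path; thus $\omega(P')\le \omega(P)$, so $\omega(E')\le \omega(E)$, and minimality of $E$ forces equality. This also means $E'$ is a minimum weight EE once we establish (ii).

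The main obstacle is (ii). The right framework is to contract each original component $V_i$ of $G$ to a single super-vertex, producing an auxiliary multigraph $\hat G_E$ whose super-vertices are $[k]$ and whose super-edges record which super-vertices are linked by the walks of $E$. Since each $V_i$ is connected in $G$, weak connectivity of $G+E$ is equivalent to connectivity of $\hat G_E$. In this picture, every path or cycle of the partition of $E$ becomes a walk in $\hat G_{E}$ that visits a particular set of super-vertices; in particular, the path $P$ contributes a walk visiting exactly the super-vertices in $I$. The replacement path $P'$ also visits exactly the super-vertices in $I$ (it is incident to each component in $I$, and only to those, since $I$ is required to be the set of components touched and the endpoints $u,v$ fix membership of the two endpoint components). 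Therefore the walk contributed by $P'$ in the contracted multigraph spans the same set of super-vertices as the walk from $P$, so $\hat G_{E'}$ has the same vertex-connectivity structure as $\hat G_E$; more precisely, any two super-vertices joined via $P$ remain joined via $P'$ (both walks, together with internal connectedness of each $V_i$, connect all super-vertices in $I$). Hence $\hat G_{E'}$ is connected, so $G+E'$ is weakly connected.

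Combining (i)--(iii) yields that $E'$ is an EE of the same weight as $E$, and therefore a minimum weight EE.
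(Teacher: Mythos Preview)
The paper does not actually prove this lemma; it simply labels it ``an easy observation'' and moves on. Your write-up supplies the details the paper omits, and the overall structure---check balances, check weak connectivity via the component-contraction picture, and compare weights---is sound.

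Two small corrections that do not affect the argument. First, in step (i) a path from $u$ to $v$ contributes $-1$ to the balance at $u$ and $+1$ at $v$ (balance is in-degree minus out-degree), not the other way around; of course the only thing that matters is that $P$ and $P'$ have identical contributions, so the conclusion stands. Second, in step (ii) you assert that $P'$ visits \emph{exactly} the components in $I$. The hypothesis only says $P'$ is incident to each $i\in I$; nothing rules out $P'$ touching additional components. Fortunately your connectivity argument needs only the inclusion $I\subseteq$ (components visited by $P'$): since $P'$ visits at least all of $I$, every pair of super-vertices connected through $P$ in $\hat G_E$ is still connected through $P'$ in $\hat G_{E'}$, and the remaining edges are unchanged. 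So drop the ``and only to those'' clause and the proof goes through cleanly.
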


For every vertex $u$ of balance 1 and every vertex $w$ of balance $-1$, with $u \in
V_i$ and $w \in V_j$ (where we may have $i=j$), and for every $ \{i,j\}\subseteq I \subseteq
[k]$, we let $P(u,I,w)$ represent an (arbitrary, but
fixed) minimum weight path from $u$ to $w$ having a vertex in $V_t$ for each $t\in I$.
Note that by (P2), there always exists such a path that visits each connected component exactly once and only one vertex in each connected component, unless $i=j$ 
in which case component $i$ is visited twice.

\subsection{Bipartite matching form}

We treat the resulting problem as a labelled bipartite matching problem.  
Let $U$ denote the vertices of balance 1, and $W$ the
vertices of balance $-1$. Conceptually, we will solve the problem the following way:
Let $G_2=(U \cup W, E')$ be a weighted bipartite multigraph where for every $u \in
U$ and $w \in W$, with $u \in  V_i$ and $w \in V_j$, and for every $I$ such that $ \{i,j\} \subseteq I
\subseteq [k] $, there is an edge $e_{u,I,w}$ from $u$ to
$w$, representing the path $P(u,I,w)$. The weight of $e_{u,I,w}$ equals the weight of the path
$P(u,I,w)$ if it is at most $\ell$ and, otherwise, it equals $\ell+1$.
Given a perfect matching $M$ in $G_2$, we define a
multiset $EP(M)$ of arcs as follows. Initially $EP(M)=\emptyset$, then for every edge
$e_{u,I,w}\in M$ we add the arcs of the path $P(u,I,w)$ to $EP(M)$,
creating multiple copies of some arcs, if necessary. The following is a simple observation.

\begin{lemma}\label{lem:EM}
An instance of {\sc Eulerian Extension} is positive if and only if there is a perfect matching
$M$ in the above graph such that $G+EP(M)$ is weakly connected and the total weight of $EP(M)$ is at most $\ell.$
\end{lemma}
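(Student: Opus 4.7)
The plan is to derive this equivalence directly from the two preceding lemmas combined with a balance-counting argument. For the forward direction, I would start with a yes-instance of \textsc{Eulerian Extension} and fix a minimum-weight EE $E$ of total weight at most $\ell$. By Lemma \ref{lemma:nocycles}, I may assume $E$ is partitioned into arc-disjoint paths, each starting at a vertex of $U$ and ending at a vertex of $W$, such that every vertex of nonzero balance is the end-vertex of a unique such path; in particular, these paths induce a bijection between $U$ and $W$. For each path in the partition going from $u\in V_i$ to $w\in V_j$, let $I\subseteq[k]$ be the set of components it visits; then $\{i,j\}\subseteq I$, so the edge $e_{u,I,w}$ exists in $G_2$. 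By the subsequent lemma, replacing this path by $P(u,I,w)$ yields another minimum-weight EE, so the set $M$ formed by these edges $e_{u,I,w}$ is a perfect matching in $G_2$ of weight at most $\omega(E)\le\ell$, and $G+EP(M)$ is Eulerian and hence weakly connected.

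For the backward direction, suppose $M$ is a perfect matching in $G_2$ with $G+EP(M)$ weakly connected and $\omega(EP(M))\le\ell$. I only need to verify that $G+EP(M)$ is balanced at every vertex. Each edge $e_{u,I,w}\in M$ contributes a $u$-to-$w$ path to $EP(M)$, which decreases the balance at $u$ by one, increases the balance at $w$ by one, and leaves every other vertex's balance unchanged. Since $M$ matches each vertex $u\in U$ (of balance $+1$ in $G$) to exactly one $w\in W$ (of balance $-1$), the balances of the matched endpoints cancel out to zero, and vertices of balance $0$ in $G$ are touched only internally by paths and so retain balance $0$. Combined with weak connectivity, this makes $G+EP(M)$ Eulerian, so $EP(M)$ is an EE of weight at most $\ell$ and the instance is positive.

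I do not anticipate any real obstacle here; the lemma is essentially a bookkeeping translation between EE path decompositions and perfect matchings in $G_2$, with the ``no cycles'' transformation absorbing the one genuine complication (namely, the cycles in a generic EE decomposition that would not naturally correspond to matching edges). The only minor subtlety is the cap of $\ell+1$ placed on edge weights in $G_2$: edges with $\omega(P(u,I,w))>\ell$ can never belong to a matching of total weight at most $\ell$, so this cap does not distort the equivalence.
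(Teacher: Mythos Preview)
Your argument is correct. The paper itself gives no proof, labeling the lemma a ``simple observation''; your write-up supplies exactly the natural details one would expect---the path decomposition from Lemma~\ref{lemma:nocycles} and the replacement lemma for the forward direction, and a direct balance check for the converse---so there is nothing to compare.
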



\subsection{Sieving for weakly connected EE}

Using the determinant of a bipartite adjacency matrix, we can
enumerate perfect matchings $M$ of $G_2$. By construction, all vertices of $G+ EP(M)$ will be of balance zero
for every perfect matching $M$, however, in general $G+ EP(M)$ will not
be connected and hence not Eulerian. We will sieve for matchings
$M$ which cause $G+ EP(M)$ to be connected, by setting up a sum, computing a
large polynomial over a field of characteristic two, where every term
corresponds to a perfect matching of $G_2$, and where a term is counted an
odd number of times if and only if the corresponding matching $M$ is such
that $G+ EP(M)$ is connected. This approach was previously used by Cygan {\em et al.} \cite{CyNePiPiRoWo} 
for solving connectivity problems parameterized by treewidth (see also \cite{Bjo2010}).

For every edge $e_{u,I,w}$ of $G_2$, let
$x(e_{u,I,w})=x_{u,I,w} \cdot z^r$ where $x_{u,I,w}$ is a new indeterminate,
$z$ is a common indeterminate, and $r$ equals the weight of $e_{u,I,w}$. For every $I \subseteq [k]$, denote $V_I=\bigcup_{i \in I}
V_i$, and let $A_I$ be a matrix with rows indexed by $U \cap V_I$ and
columns indexed by $W \cap V_I$, and with entries
$
A_I(u,w)=\sum_{\{u,w\}\subseteq J\subseteq I} x(e_{u,J,w}).
$
It is well-known that the determinant of a square matrix can be computed in polynomial time \cite{Rote2001}. Thus,
given $I$ and an assignment to all variables $x$ and $z$, we can
evaluate $\det A_I$ in time $O^*(2^{|I|})$, assuming that some table of the weights of the edges $e_{u,J,w}$, $J\subseteq I$, has been prepared. 
Let
\[
Q(\bar x, z):=\sum_{I \subseteq [k]\setminus\{1\}} (\det A_I)(\det A_{[k] \setminus I}),
\]
where the determinant of a non-square matrix is 0 and $\det A_{\emptyset}=1$. 
We will show that, over a field of characteristic two, $Q(\bar x,z)$ enumerates exactly those perfect matchings $M$
for which $G+ EP(M)$ is weakly connected. 

The following lemma is immediate from the Leibniz formula for the determinant. Also note that if $A_I$ is non-square then the graph $G_{2,I}$ of the lemma has no perfect matchings. 

\begin{lemma}
\label{lemma:1}
Let $I \subseteq [k]$. Let $G_{2,I}=((U \cap V_I) \cup (W \cap V_I),E_I)$
denote the bipartite multigraph with edges $e_{u,J,w}$ from $G_2[V_I]$ for which $J \subseteq I.$
Then 
\[
\det A_I=\sum_{M_I} \prod_{e \in M_I} x(e), \mbox{ where $M_I$ ranges over perfect matchings of $G_{2,I}.$}
\]
\end{lemma}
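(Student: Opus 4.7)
The plan is to prove the formula by a direct expansion of the Leibniz formula for determinants, combined with the fact that we work over a field of characteristic two.

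First I would dispose of the degenerate case. If $|U \cap V_I| \neq |W \cap V_I|$, then $A_I$ is non-square, hence $\det A_I = 0$ by the convention adopted in the paper; on the other hand, $G_{2,I}$ admits no perfect matching, so the right-hand side is an empty sum, which also equals $0$. So we may assume $|U \cap V_I| = |W \cap V_I| = n$, order both sides arbitrarily, and regard $A_I$ as an $n \times n$ matrix.

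Next I would apply the Leibniz formula:
\[
\det A_I \;=\; \sum_{\sigma} \sgn(\sigma) \prod_{u \in U \cap V_I} A_I(u, \sigma(u)),
\]
where $\sigma$ ranges over bijections from $U \cap V_I$ to $W \cap V_I$. Since we work in characteristic two, every sign equals $1$ and may be dropped. Substituting the definition $A_I(u,w) = \sum_{\{u,w\} \subseteq J \subseteq I} x(e_{u,J,w})$ into this expression and distributing the product yields
\[
\det A_I \;=\; \sum_{\sigma} \sum_{\mathbf{J}} \prod_{u \in U \cap V_I} x\bigl(e_{u, J_u, \sigma(u)}\bigr),
\]
where $\mathbf{J} = (J_u)_{u \in U \cap V_I}$ ranges over all families with $\{u,\sigma(u)\} \subseteq J_u \subseteq I$ for every $u$.

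Finally, I would observe that the pairs $(\sigma, \mathbf{J})$ appearing above are in bijection with the perfect matchings $M_I$ of $G_{2,I}$: the pair $(\sigma, \mathbf{J})$ corresponds to the edge set $\{e_{u, J_u, \sigma(u)} : u \in U \cap V_I\}$, which is a perfect matching of $G_{2,I}$ precisely because $G_{2,I}$ contains an edge $e_{u,J,w}$ for every triple with $\{u,w\} \subseteq J \subseteq I$, and conversely each perfect matching of $G_{2,I}$ determines a unique $\sigma$ and a unique family $\mathbf{J}$. Re-indexing the above sum via this bijection gives $\det A_I = \sum_{M_I} \prod_{e \in M_I} x(e)$, as required. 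There is no real obstacle here; the only point worth flagging is that $G_{2,I}$ is a genuine \emph{multigraph} whose parallel edges are indexed by the labels $J$, so the internal sum defining each entry $A_I(u,w)$ is exactly what enumerates those parallel edges in the matching expansion.
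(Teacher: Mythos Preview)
Your proof is correct and follows exactly the approach the paper indicates: the paper states that the lemma ``is immediate from the Leibniz formula for the determinant'' and notes separately the non-square case, which you handle as well. Your write-up simply spells out in detail the expansion and the bijection between pairs $(\sigma,\mathbf{J})$ and perfect matchings of the multigraph $G_{2,I}$, together with the observation that signs vanish in characteristic two; there is nothing to add or correct.
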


\begin{lemma}\label{lem:only-matchings}
Every term of $Q(\bar x,z)$ equals $\prod_{e \in M} x(e)$ for some perfect
matching $M$ of $G_2$, where $x(e)$ for $e=e_{u,I,w}$ equals $x(e_{u,I,w})$. For every perfect matching $M$ of $G_2$, the sum defining $Q(\bar x,z)$ has $\prod_{e \in M} x(e)$ as a term (possibly with even multiplicity).
\end{lemma}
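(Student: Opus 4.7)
The plan is to apply Lemma \ref{lemma:1} directly to each factor in the sum defining $Q(\bar x, z)$ and combine the two resulting matching enumerations using the bipartition $V = V_I \cup V_{[k]\setminus I}$. The whole argument is essentially bookkeeping; I do not anticipate a real obstacle.

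For the first part (every term of $Q$ is of the claimed form), I would fix $I \subseteq [k]\setminus\{1\}$ and expand
\[
(\det A_I)(\det A_{[k]\setminus I}) = \sum_{M_I}\sum_{M_{[k]\setminus I}} \Bigl(\prod_{e \in M_I} x(e)\Bigr)\Bigl(\prod_{e \in M_{[k]\setminus I}} x(e)\Bigr),
\]
where the sums range over perfect matchings $M_I$ of $G_{2,I}$ and $M_{[k]\setminus I}$ of $G_{2,[k]\setminus I}$, as guaranteed by Lemma \ref{lemma:1}. Because $V_I$ and $V_{[k]\setminus I}$ partition $V$, the edge sets of $M_I$ and $M_{[k]\setminus I}$ are disjoint and together cover every vertex of $G_2$ exactly once; thus $M := M_I \cup M_{[k]\setminus I}$ is a perfect matching of $G_2$, and the product above equals $\prod_{e \in M} x(e)$. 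Summing over $I$ shows that every monomial appearing in $Q$ has this form.

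For the second part (every perfect matching of $G_2$ contributes), I would just single out the index $I = \emptyset$, which lies in $[k]\setminus\{1\}$ and so is present in the sum defining $Q$. By the empty-matrix convention, $\det A_{\emptyset} = 1$, while $G_{2,[k]} = G_2$ (all edges $e_{u,J,w}$ satisfy $J \subseteq [k]$), so Lemma \ref{lemma:1} applied to $\det A_{[k]}$ already lists $\prod_{e \in M} x(e)$ as a term for every perfect matching $M$ of $G_2$. Hence $\prod_{e \in M} x(e)$ occurs as a term of $Q(\bar x, z)$ at least once. Of course other choices of $I$ (corresponding to partitions of $M$ along the label structure) may contribute the same monomial again; this is exactly what yields ``possibly with even multiplicity,'' but for this lemma we need only a single occurrence, which $I = \emptyset$ provides.
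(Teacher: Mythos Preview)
Your proof is correct and follows essentially the same route as the paper: apply Lemma~\ref{lemma:1} to each factor and use that $V_I$ and $V_{[k]\setminus I}$ partition $V$ so that the union $M_I \cup M_{[k]\setminus I}$ is a perfect matching of $G_2$; then single out $I=\emptyset$ (so the summand is $\det A_{[k]}$) to witness that every perfect matching contributes at least once. The only cosmetic difference is that the paper handles the $I=\emptyset$ observation first and the general $I$ second.
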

\begin{proof}
Let $M$ be a perfect matching of $G_2$. If $I=\emptyset$, $(\det A_I)(\det A_{[k] \setminus I})=\det A_{[k]}$ has $\prod_{e \in M} x(e)$ as a term by Lemma~\ref{lemma:1}.

Let $I \subseteq [k]\setminus \{1\}$. By Lemma~\ref{lemma:1}, each term in both $\det
A_I$ and $\det A_{[k] \setminus I}$ corresponds to a perfect matching of
$G_{2,I}$ and $G_{2,[k] \setminus I}$, respectively. 
Let $M_1$ and $M_2$ be arbitrary perfect
matchings of the respective graphs. Since $V_I$ and $V_{[k] \setminus I}$
partition $V$, we have that $M_1$ and $M_2$ are disjoint and 
$M=M_1 \cup M_2$ forms a perfect matching of $G_2$. Furthermore, 
$(\prod_{e \in M_1} x(e))(\prod_{e \in M_2} x(e))=\prod_{e \in M} x(e)$.
Thus for every selection $I$, each term of $Q$ generated equals 
$\prod_{e \in M} x(e)$ for some perfect matching $M$, and $Q(\bar x,z)$
as a whole thus enumerates perfect matchings in this sense.
\end{proof}

\begin{lemma}\label{lem:discon}
Let $M$ be a perfect matching of $G_2$ such that $G+EP(M)$ has $\rho$ connected components. 
Then the term $\prod_{e \in M} x(e)$ is enumerated $2^{\rho-1}$ times
in $Q(\bar x, z)$. 
\end{lemma}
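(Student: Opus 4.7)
The plan is to analyze, for a fixed perfect matching $M$ of $G_2$, which terms of the sum $Q(\bar x,z)=\sum_{I\subseteq[k]\setminus\{1\}}(\det A_I)(\det A_{[k]\setminus I})$ contribute a copy of the monomial $\prod_{e\in M}x(e)$. The starting observation is that because every edge $e_{u,J,w}$ carries its own indeterminate $x_{u,J,w}$, the monomial $\prod_{e\in M}x(e)$ uniquely determines the matching $M$; in particular, for any $I$, Lemma~\ref{lemma:1} implies that this monomial appears in $(\det A_I)(\det A_{[k]\setminus I})$ with multiplicity either $0$ or $1$.

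Next I would characterize those $I$ for which the multiplicity is $1$. Such an $I$ exists precisely when $M$ decomposes as $M_I\cup M_{[k]\setminus I}$ with $M_I$ a perfect matching of $G_{2,I}$ and $M_{[k]\setminus I}$ a perfect matching of $G_{2,[k]\setminus I}$; equivalently, every edge $e_{u,J,w}\in M$ satisfies either $J\subseteq I$ (so $u,w\in V_I$) or $J\subseteq[k]\setminus I$ (so $u,w\in V_{[k]\setminus I}$). Thus $I$ is a valid ``splitter'' for $M$ iff no label $J$ of an edge in $M$ straddles the partition $(I,[k]\setminus I)$.

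The key combinatorial step is then to translate this into a hypergraph connectivity statement. Let $H$ be the hypergraph on vertex set $[k]$ whose hyperedges are the labels $J$ appearing on edges of $M$ (with multiplicity). The valid splitters $I$ are exactly the unions of connected components of $H$. Moreover, the connected components of $H$ are in bijection with the connected components of $G+EP(M)$: each path $P(u,J,w)$ contributed by an edge $e_{u,J,w}\in M$ merges precisely the original components indexed by $J$, so the partition of $[k]$ into components of $H$ equals the partition of $[k]$ induced by the components of $G+EP(M)$. Hence $H$ has $\rho$ components.

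Finally I would count: the number of unions of components of $H$ is $2^\rho$, and restricting the sum to $I\subseteq[k]\setminus\{1\}$ forces the component of $H$ containing vertex $1$ to lie in $[k]\setminus I$, eliminating exactly one binary choice and leaving $2^{\rho-1}$ valid splitters. Combining with the multiplicity-one observation above yields the claim. I do not foresee a hard step here; the only point requiring care is verifying that each valid $I$ contributes exactly one copy of $\prod_{e\in M}x(e)$ (so that the count is not inflated by spurious coincidences among monomials), which follows immediately from the fact that the indeterminates $x_{u,J,w}$ are pairwise distinct.
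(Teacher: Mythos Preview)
Your proposal is correct and follows essentially the same approach as the paper: identify the sets $I$ that contribute a copy of $\prod_{e\in M}x(e)$ as exactly those that are unions of the $\rho$ blocks of the partition of $[k]$ induced by the components of $G+EP(M)$, then count $2^{\rho-1}$ such sets with $1\notin I$. The paper argues this directly via the block decomposition $[k]=I_1\cup\cdots\cup I_\rho$, whereas you phrase the same idea through the auxiliary hypergraph $H$ on $[k]$; the two formulations are equivalent and the multiplicity-one observation you highlight (from the distinct indeterminates $x_{u,J,w}$) is exactly what the paper uses implicitly via Lemma~\ref{lemma:1}.
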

\begin{proof}
Let $\rho=1$. Recall that $M$ is generated in the term corresponding to $I=\emptyset$ (see the proof of the previous lemma). 
We show that no other choice of $I$
generates $M$. Let $I \subseteq [k]\setminus \{1\}$ be non-empty. Since $G+EP(M)$ is
connected, there is some edge $e_{u,J,w}$, $J\subseteq I$, in $M$ such that either $u$ and
$w$ lie on different sides of the partition $(V_I, V_{[k]\setminus I})$,
or $u,w$ lie on the same side but some $j \in J$ lies on the other side.
In both cases, the edge $e_{u,J,w}$ cannot be generated in either $\det
A_I$ or $\det A_{[k] \setminus I}$. 

Now we may assume that $\rho\ge 2$.
Let $C_1, \ldots, C_{\rho}$ be the connected components of $G+EP(M)$ and observe that 
for each $i\in [k]$, the vertex set $V_i$ is contained in a single component $C_j$. 
Let $[k]=I_1 \cup \ldots \cup I_{\rho}$ denote the partition of $[k]$ according to the weak components $C_i$,
i.e., $I_i=\{j \in [k]: V_j \subseteq V(C_i)\}$; choose the numbering
so that $1 \in I_1$.  For $J \subseteq [\rho]$, let $I_J:=\bigcup_{j \in J} I_j$. 

Let $1\in J\subseteq [\rho]$.
Observe that $M$ partitions into one perfect
matching $M_1$ for $G_{2,I_J}$ and one perfect matching $M_2$ for 
$G_{2, [k] \setminus I_J}$.
Since $\prod_{e \in M_1} x(e)$ is generated in $\det A_{I_J}$, and $\prod_{e \in M_2} x(e)$ is generated in $\det
A_{[k] \setminus I_J}$, $\prod_{e \in M} x(e)$ is generated (exactly once)
in their product. By an argument similar to the one used in the first paragraph of this proof, we can see that
$M$ can be generated only as above. Since there are exactly $2^{\rho -1}$ ways to choose a subset $J$ of $[\rho]$ containing 1, we are done.
\end{proof}

Thus, we have the following:

\begin{theorem}\label{thm:Q}
Over a field of characteristic two, $Q(\bar x,z)$ enumerates exactly those perfect matchings $M$
for which $G+ EP(M)$ is weakly connected. 
\end{theorem}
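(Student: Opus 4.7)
My plan is to derive Theorem~\ref{thm:Q} as an immediate combination of Lemmas~\ref{lem:only-matchings} and \ref{lem:discon}, exploiting the characteristic two assumption to cancel out contributions from disconnected matchings.

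First, I would make the book-keeping observation that distinct perfect matchings of $G_2$ cannot interfere with one another in $Q(\bar x,z)$. Indeed, every edge $e_{u,I,w}$ carries its own indeterminate $x_{u,I,w}$, so two distinct perfect matchings $M \neq M'$ give rise to distinct monomials $\prod_{e \in M} x(e)$ and $\prod_{e \in M'} x(e')$ in the $x$-variables. Hence it makes sense to speak of the total multiplicity with which a particular matching $M$ is ``generated'' in $Q$, namely the coefficient of $\prod_{e\in M} x(e)$ (as a polynomial in $\bar x$), and different matchings are independent contributions.

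Next, Lemma~\ref{lem:only-matchings} tells us that every term in $Q(\bar x,z)$ arises from some perfect matching $M$ of $G_2$, and that every perfect matching $M$ is generated at least once. Lemma~\ref{lem:discon} refines this: the multiplicity with which $M$ is generated equals exactly $2^{\rho-1}$, where $\rho$ is the number of weakly connected components of $G + EP(M)$. Since we are working over a field of characteristic two, $2^{\rho-1}$ vanishes as soon as $\rho \geq 2$ and equals $1$ precisely when $\rho = 1$. Consequently, after reduction mod $2$, the surviving terms of $Q(\bar x,z)$ are exactly those monomials $\prod_{e\in M} x(e) \cdot z^{\omega(EP(M))}$ indexed by the perfect matchings $M$ for which $G + EP(M)$ is weakly connected.

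There is essentially no obstacle to overcome at this stage, since all the combinatorial and algebraic work has been carried out in the earlier lemmas; the role of Theorem~\ref{thm:Q} is merely to record the consequence of combining them. The only point one needs to keep in mind is that the statement about ``enumeration'' is a statement about the polynomial $Q(\bar x,z) \in \mathbb{F}[\bar x, z]$ for $\mathbb{F}$ of characteristic two, so that $2^{\rho-1}\equiv 0$ whenever $\rho \geq 2$; over any other ring this cancellation would fail and disconnected matchings would still contribute.
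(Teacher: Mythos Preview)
Your proposal is correct and matches the paper's approach exactly: the paper also presents Theorem~\ref{thm:Q} as an immediate consequence of Lemmas~\ref{lem:only-matchings} and~\ref{lem:discon}, with no further argument beyond ``Thus, we have the following.'' Your explicit remark that distinct matchings yield distinct monomials (because each edge carries its own indeterminate) is a helpful clarification that the paper leaves implicit.
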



\subsection{Computing $Q(\bar x, z)$ fast}


We will now show how to evaluate $Q(\bar x, z)$, given an instantiation of the variables $\bar x$ and $z$, 
in time $O^*(2^k)$. This will be achieved using dynamic programming, with the main ingredient being the
fast zeta transform described below. 

We will describe a sequence of tables. First, for each $(u,I,w)$ with $u \in U \cap V_i$, $w \in W \cap V_j$, 
and $ \{i,j\} \subseteq I \subseteq [k] $, we let $d(u,I,w)$ denote the minimum weight of a path from $u$ to $w$, 
passing via exactly the set $I$ of components.


\begin{lemma}
We can fill in all values of $d(u,I,w)$ in time $O^*(2^k)$.
\end{lemma}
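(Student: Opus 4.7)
The plan is to compute the table $d(u,I,w)$ by a Held--Karp style dynamic programming over subsets of $[k]$. The key structural fact, already noted in the paragraph preceding the statement, is that by the triangle inequality (P2) a minimum weight path from $u \in V_i$ to $w \in V_j$ passing through exactly the components in $I$ can be taken to visit each $t \in I$ at a single vertex, except that when $i=j$ the component $V_i$ is visited at exactly the two vertices $u$ and $w$. This reduces each candidate path to an ordered sequence of at most $|I|+1$ representative vertices, one per visited component.

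The subtlety is the case $i=j$: a naive TSP-style recursion that removes the component of the endpoint would also remove the component of the starting vertex. To sidestep this, I would introduce an auxiliary table $g$. For $u \in V_i$, a subset $T \subseteq [k] \setminus \{i\}$, and a vertex $x \in V_\ell$ with $\ell \in T$, let $g(u,T,x)$ denote the minimum weight of a path in $K_V$ from $u$ to $x$ whose set of visited components is exactly $\{i\} \cup T$ and which meets $V_i$ only at its initial vertex $u$. The base case $T = \{\ell\}$ gives $g(u,\{\ell\},x) = \omega(u,x)$ by iterated (P2), and for $|T| \ge 2$, guessing the penultimate component $\ell'$ and the last vertex $y \in V_{\ell'}$ before $x$ yields
\[
g(u,T,x) \;=\; \min_{\ell' \in T \setminus \{\ell\}}\; \min_{y \in V_{\ell'}} \bigl( g(u, T \setminus \{\ell\}, y) + \omega(y, x) \bigr).
\]

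Given $g$, the values of $d$ are read off in one additional step. If $i \ne j$ then an optimal path visits $V_i$ only at $u$ automatically, so $d(u,I,w) = g(u, I \setminus \{i\}, w)$. If $i = j$ and $|I| = 1$ then $d(u,I,w) = \omega(u,w)$. Otherwise, when $i = j$ and $|I| \ge 2$, the path leaves $V_i$ at $u$, traverses the components in $I \setminus \{i\}$, and returns to $V_i$ at $w$; guessing the last component $\ell$ and the last vertex $x \in V_\ell$ before $w$ gives
\[
d(u,I,w) \;=\; \min_{\ell \in I \setminus \{i\}}\; \min_{x \in V_\ell} \bigl( g(u, I \setminus \{i\}, x) + \omega(x,w) \bigr).
\]

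The running time is immediate: $g$ has $O(n^2 \cdot 2^k)$ entries, each computed as the minimum of $O(nk)$ candidates, so building $g$ takes $O(n^3 k \cdot 2^k) = O^*(2^k)$ time; extracting all values of $d$ costs the same amount. The main point to argue carefully is the correctness of the ``single visit'' assumption that is built into $g$, namely that an optimal path gains nothing by re-entering its starting component or by revisiting any component twice, which is precisely the content of iterated (P2) together with the remark preceding the lemma.
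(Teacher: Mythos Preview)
Your proposal is correct and follows the same Held--Karp dynamic programming as the paper; your auxiliary table $g$ is effectively the paper's table $d$ in the case $i \neq j$, and you spell out the $i=j$ case that the paper dismisses in one line as ``analogous.'' The only difference is presentation: the paper recurses directly on $d$ by peeling off the component of $w$, while you introduce $g$ to keep the starting component separate and then read off $d$ from $g$ in a final step.
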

\begin{proof}
We may assume that $i\neq j$ as the case $i=j$ is analogous.
As previously observed, by Property (P2) for every $(u,I,w)$ as described
there is a minimum weight path from $u$ to $w$ which contains exactly $|I|$ vertices
and hence passes each component $i \in I$ exactly once. 
Using this, we first fill in all values $d(u,\{i,j\},w)=\omega(u,w)$.
Then, in increasing order of $|I|\ge 3$, we fill in values for $d(u,I,w)$ as follows:
$$d(u,I,w)=\min\{d(u,J,v)+\omega(v,w): \{u,v\}\subseteq J\subset I, |J|=|I|-1, w\in V_{I\setminus J}\}.$$
The total time for the procedure is $O^*(2^k)$. 
\end{proof}

Given this, we may now create a table for the concrete values of $x(e_{u,I,w})=x_{u,I,w}z^{d(u,I,w)}$
using the previous table and the given values for $\bar x$ and $z$. 
The remaining task is to create the matrices $A_I$ for $I \subseteq [k]$;
recall that $A_I(u,w)=\sum_{J \subseteq I} x(e_{u,J,w})$. 
To compute $A_I(u,w)$, we may use the \emph{fast zeta transform} of Yates~\cite{Yates37},
as previously also used for exact algorithms by, e.g., Bj\"orklund et al.~\cite{BjHuKo09}.

\begin{lemma}[\cite{Yates37,BjHuKo09}]
Given a function $f: 2^N \rightarrow R$ for some ground set $N$ and ring $R$,
we may compute all values of $\hat f: 2^N \rightarrow R$ defined as
$\hat f(S) = \sum_{A \subseteq S} f(A)$ using $O^*(2^{|N|})$ ring operations. 
\end{lemma}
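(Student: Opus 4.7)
The plan is to prove the lemma by a dimension-by-dimension dynamic programming scheme that processes the elements of $N$ one at a time. Fix an arbitrary ordering $N=\{e_1,\ldots,e_n\}$ and, for each $i\in\{0,1,\ldots,n\}$, define an auxiliary table $f_i:2^N\to R$ by
$$f_i(S)=\sum_{\substack{A\subseteq N\\ A\cap\{e_{i+1},\ldots,e_n\}=S\cap\{e_{i+1},\ldots,e_n\}\\ A\cap\{e_1,\ldots,e_i\}\subseteq S\cap\{e_1,\ldots,e_i\}}} f(A).$$
In words, $f_i(S)$ is the partial zeta transform that has already been summed over the first $i$ coordinates while holding the remaining coordinates fixed to agree with $S$. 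By construction $f_0=f$ and $f_n=\hat f$, so it suffices to show that each $f_i$ can be computed cheaply from $f_{i-1}$.

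The key observation is a one-coordinate recurrence. If $e_i\notin S$, then the summation condition for $f_i(S)$ coincides exactly with that for $f_{i-1}(S)$, so $f_i(S)=f_{i-1}(S)$. If $e_i\in S$, then splitting the sum defining $f_i(S)$ according to whether $e_i\in A$ gives
$$f_i(S)=f_{i-1}(S)+f_{i-1}(S\setminus\{e_i\}),$$
where the first term accounts for $A$ with $e_i\in A$ and the second for $A$ with $e_i\notin A$. Hence each transition $f_{i-1}\mapsto f_i$ costs at most one ring addition per subset of $N$, i.e.\ at most $2^n$ ring operations.

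Iterating the update for $i=1,\ldots,n$ yields $\hat f=f_n$ after at most $n\cdot 2^n = O^*(2^{|N|})$ ring operations and the same number of table lookups, which matches the claimed bound. The computation can be done in place in a single array indexed by $2^N$: when processing coordinate $e_i$, it suffices to visit the cells $S$ with $e_i\in S$ in any order (e.g.\ increasing $|S|$) so that $f_{i-1}(S\setminus\{e_i\})$ has not yet been overwritten. There is no real obstacle here beyond selecting the right invariant: once one commits to processing a single coordinate per pass, every update is local and the total cost telescopes to the desired $O^*(2^{|N|})$.
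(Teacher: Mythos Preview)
Your proof is correct and is precisely the classical Yates dynamic programming argument; the paper itself does not prove this lemma but merely cites it from~\cite{Yates37,BjHuKo09}, so there is nothing to compare against. One tiny remark: in your in-place comment, the order in which you visit the sets $S$ with $e_i\in S$ is actually irrelevant, since the cell $S\setminus\{e_i\}$ does not contain $e_i$ and hence is never overwritten during pass~$i$; your ``increasing $|S|$'' suggestion is harmless but unnecessary.
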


We note that $A_I(u,w)$
 is the zeta transform of $x(e_{u,I,w})$ by definition,
hence the matrices $A_I$, $I\subseteq [k]$,
can be precomputed from the values of $\bar x$ in
total time $O^*(2^k)$. 

Now using the definition of $Q(\bar x, z)$ and the above runtime bounds, we obtain the following:

\begin{theorem}\label{thm:Qeval}
Given an instantiation of the variables $\bar x$ and $z$, we can evaluate $Q(\bar x, z)$
in time $O^*(2^k)$. 
\end{theorem}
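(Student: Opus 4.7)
The plan is to assemble the evaluation of $Q(\bar x, z)$ from four ingredients, each running in $O^*(2^k)$ time, and then combine them using the defining sum. Since the sum has $2^{k-1}$ terms and each term involves two determinants over matrices of size polynomial in $|V|$, the total time of the final summation is $O^*(2^k)$ as well. The argument boils down to showing that the $A_I$ matrices can be made available fast enough.

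First I would invoke the preceding lemma to fill in a table of all values $d(u,I,w)$, for every $(u,w)$ with $u\in U$, $w\in W$ and every $I$ with $\{i,j\}\subseteq I\subseteq [k]$ (where $u\in V_i,w\in V_j$). This costs $O^*(2^k)$. Then, given the instantiation of $\bar x$ and $z$, I would compute the concrete scalar $x(e_{u,I,w})=x_{u,I,w}\cdot z^{d(u,I,w)}$ for each such triple; this is a single ring operation per entry and fits easily within the same bound.

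The main step is computing all entries $A_I(u,w)=\sum_{J:\{u,w\}\subseteq J\subseteq I}x(e_{u,J,w})$ for all $I\subseteq [k]$ simultaneously. I would fix each pair $(u,w)$ in turn and treat $J\mapsto x(e_{u,J,w})$ as a function from $2^{[k]}$ to the ambient field (setting the value to $0$ whenever $\{i,j\}\not\subseteq J$). Its zeta transform, given by the cited Yates/Bj\"orklund--Husfeldt--Koivisto lemma, uses $O^*(2^k)$ ring operations, and its value at $I$ is exactly $A_I(u,w)$. Since there are only polynomially many pairs $(u,w)$, the total cost of filling every entry of every $A_I$ is $O^*(2^k)$.

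Finally, for each $I\subseteq [k]\setminus\{1\}$ I would pull out the submatrices $A_I$ and $A_{[k]\setminus I}$ from the precomputed table, compute their determinants in polynomial time (treating non-square cases as $0$ and the empty case as $1$), and accumulate the product into $Q(\bar x,z)$. Summing over the $2^{k-1}$ choices of $I$ yields the value of $Q(\bar x,z)$ in $O^*(2^k)$ total time. The main potential obstacle is the third step: if one tried to compute each $A_I$ independently, one would pay $O^*(2^{|I|})$ per matrix entry and summing over $I$ would blow up; the fast zeta transform is precisely what avoids this, amortizing the subset-sum computation across all $I$ at once.
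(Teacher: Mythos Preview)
Your proof is correct and follows essentially the same approach as the paper: compute the $d(u,I,w)$ table via the preceding lemma, instantiate the values $x(e_{u,I,w})$, apply the fast zeta transform pairwise over $(u,w)$ to obtain all entries $A_I(u,w)$ simultaneously, and then sum determinants over the $2^{k-1}$ choices of $I$. Your write-up is in fact a bit more explicit than the paper's in spelling out the zero-padding for $J\not\supseteq\{i,j\}$ and the final determinant-summation step, but the ideas are identical.
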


\subsection{Main result for Eulerian Extension}

In this subsection we will prove the main result of Section \ref{sec:EE}.

\begin{theorem}\label{thm:main2}
The \textsc{Eulerian Extension} problem with $k$ weak components in $G$ and $\ell$ bounded by a polynomial in the number of vertices in $G$
can be solved by a randomized algorithm in $O^*(2^k)$ time and space.
\end{theorem}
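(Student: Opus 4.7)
The plan is to exploit Theorem \ref{thm:Q} by viewing $Q(\bar x,z)$ as a polynomial in $z$ whose coefficients are polynomials in $\bar x$. Over a field of characteristic two, Lemmas \ref{lem:only-matchings} and \ref{lem:discon} give
\[
Q(\bar x,z) \;=\; \sum_{M \text{ good}} \Big(\prod_{e_{u,I,w}\in M} x_{u,I,w}\Big)\, z^{w(M)},
\]
where ``good'' means that $M$ is a perfect matching of $G_2$ with $G+EP(M)$ weakly connected, and $w(M)$ denotes the total weight of $EP(M)$. Because different matchings consist of different edges and each edge $e_{u,I,w}$ carries its own variable $x_{u,I,w}$, the monomials appearing on the right-hand side are pairwise distinct. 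Thus the instance of \textsc{Eulerian Extension} is a yes-instance if and only if the coefficient $q_j(\bar x)$ of $z^j$ in $Q(\bar x,z)$ is a nonzero polynomial for some $j\le\ell$.

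Next I would observe that $\deg_z Q$ is polynomial in $n$: every edge of $G_2$ has weight at most $\ell+1$, and each matching contains $|U|$ edges, so $\deg_z Q \le |U|(\ell+1) = \mathrm{poly}(n)$ once we use the hypothesis that $\ell$ is polynomial in $n$. Consequently, for any concrete assignment $\bar a$ of the $\bar x$ variables, the univariate polynomial $Q(\bar a,z)$ of polynomially bounded degree can be reconstructed by interpolation from $\deg_z Q + 1$ point evaluations, each computable in time $O^*(2^k)$ by Theorem \ref{thm:Qeval}.

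The algorithm is then: fix a working field $\mathbb{F}=\mathrm{GF}(2^t)$ of characteristic two with $|\mathbb{F}|$ a sufficiently large polynomial in $n$; draw $\bar a\in\mathbb{F}^{|\bar x|}$ uniformly at random; evaluate $Q(\bar a,z)$ at $\deg_z Q+1$ distinct scalars in $\mathbb{F}$; interpolate to recover the coefficients $q_0(\bar a),q_1(\bar a),\ldots$; and output \emph{yes} if and only if $q_j(\bar a)\ne 0$ for at least one $j\le\ell$. The total running time is $\mathrm{poly}(n)\cdot O^*(2^k)=O^*(2^k)$, and the space is dominated by the tables of path weights and the matrices $A_I$, which fit within $O^*(2^k)$.

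For correctness, a no-instance makes every $q_j$ with $j\le\ell$ the zero polynomial, so the algorithm never produces a false positive. On a yes-instance, some $q_{j^{*}}(\bar x)$ is a nonzero polynomial of degree at most $|U|$, and the Schwartz--Zippel lemma bounds the probability that $q_{j^{*}}(\bar a)=0$ by $|U|/|\mathbb{F}|$, which is made arbitrarily small by choosing $|\mathbb{F}|$ large. The only substantive thing to verify beyond the machinery already built is that different matchings cannot cancel each other out in $Q$; this is secured by the use of a separate fresh variable $x_{u,I,w}$ per edge of $G_2$, so no genuine obstacle remains — the argument is a standard combination of polynomial interpolation in $z$, random evaluation in $\bar x$, and the fast evaluation of $Q$ already provided by Theorem \ref{thm:Qeval}.
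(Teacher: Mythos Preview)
Your proposal is correct and follows essentially the same route as the paper: you group $Q(\bar x,z)$ by powers of $z$, randomly instantiate $\bar x$ over a sufficiently large field of characteristic two, recover the univariate polynomial in $z$ by interpolation using $O^*(2^k)$-time evaluations from Theorem~\ref{thm:Qeval}, and apply Schwartz--Zippel (Lemma~\ref{lem:SZ}) to certify that some coefficient $q_j$ with $j\le\ell$ is nonzero. The paper packages the interpolation step as Lemma~\ref{lem:polynomial interpolation} and uses a union bound over all $j\le\ell$ rather than a single witness $j^{*}$, but these are cosmetic differences.
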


To prove Theorem \ref{thm:main2}, apart from Theorems \ref{thm:Q} and \ref{thm:Qeval}, we will use the following two lemmas.

\begin{lemma}\label{lem:SZ} (Schwartz-Zippel \cite{Sch1980,Zip1979}). Let $P(x_1, . . . , x_n)$ be a multivariate polynomial of total
degree at most $d$ over a field $\mathbb{F}$, and assume that $P$ is not identically zero. Pick $r_1,\dots, r_n$
uniformly at random from $\mathbb{F}$. Then Pr$[P(r_1,\ldots ,r_n)=0]\le d/|\mathbb{F}|.$
\end{lemma}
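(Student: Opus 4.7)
The plan is to use the polynomial $Q(\bar x, z)$ from Theorem \ref{thm:Q} as a bivariate witness in which the exponent of $z$ records the weight of a matching while the $\bar x$-monomial identifies the matching itself. Writing $Q(\bar x, z) = \sum_{i\ge 0} Q_i(\bar x)\, z^i$ and combining Lemmas \ref{lem:only-matchings} and \ref{lem:discon} with Theorem \ref{thm:Q}, over a field of characteristic two we have
\[
Q_i(\bar x) = \sum_{M} \prod_{e_{u,J,w} \in M} x_{u,J,w},
\]
where the sum ranges over perfect matchings $M$ of $G_2$ with $G + EP(M)$ weakly connected and $w(EP(M)) = i$. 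Since distinct matchings yield distinct $\bar x$-monomials, there is no cancellation inside a single $Q_i$, so the input instance of \textsc{Eulerian Extension} is a positive instance if and only if $Q_i(\bar x) \not\equiv 0$ for some $i \le \ell$.

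The algorithm then combines the fast evaluation from Theorem \ref{thm:Qeval}, polynomial interpolation in $z$, and the Schwartz--Zippel lemma. Fix a field $\mathbb{F}$ of characteristic two of size polynomial in $|V|$ but large enough to drive the Schwartz--Zippel error probability below $1/4$, and draw $\bar x^* \in \mathbb{F}^{|U|}$ uniformly at random. Because $\ell$ is polynomial in $|V|$, every $x(e_{u,J,w})$ has $z$-degree at most $\ell+1$, and hence $Q(\bar x^*, z)$ is a univariate polynomial in $z$ of degree $D \le |U|(\ell+1) = \mathrm{poly}(|V|)$. Evaluate $Q(\bar x^*, z_j)$ at $D+1$ distinct points $z_0, \ldots, z_D \in \mathbb{F}$ using Theorem \ref{thm:Qeval}, interpolate to recover each coefficient $Q_i(\bar x^*)$, and answer \emph{yes} iff some $Q_i(\bar x^*)$ with $i \le \ell$ is nonzero.

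For correctness, on a negative instance every $Q_i$ with $i \le \ell$ vanishes identically and the algorithm never errs. On a positive instance some such $Q_i$ is a nonzero polynomial of total $\bar x$-degree at most $|U|$, so by Lemma \ref{lem:SZ}, $\Pr[Q_i(\bar x^*) = 0] \le |U|/|\mathbb{F}| < 1/4$; standard repetition then amplifies this confidence. The running time is $D+1 = \mathrm{poly}(|V|)$ invocations of the $O^*(2^k)$-time evaluation of Theorem \ref{thm:Qeval} together with polynomial-time interpolation, yielding $O^*(2^k)$ overall; the $O^*(2^k)$ working space used inside each evaluation can be reused across invocations. The main obstacle is that the weight budget $w(EP(M)) \le \ell$ has to be enforced on top of weak connectivity, which is precisely what the exponent of $z$ accomplishes; the hypothesis that $\ell$ is polynomial in $|V|$ is exactly what keeps the $z$-degree polynomial and therefore keeps the interpolation step within the $O^*(2^k)$ budget.
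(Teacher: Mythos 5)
Your proposal does not prove the stated lemma at all. The statement in question is the Schwartz--Zippel lemma itself: for any nonzero polynomial $P(x_1,\ldots,x_n)$ of total degree at most $d$ over a field $\mathbb{F}$, a uniformly random point is a zero of $P$ with probability at most $d/|\mathbb{F}|$. What you have written is instead a proof sketch of the paper's \emph{application} of this lemma --- essentially the final lemma of Section~\ref{sec:EE} and Theorem~\ref{thm:main2}, i.e., the randomized $O^*(2^k)$ algorithm that evaluates $Q(\bar x,z)$, interpolates in $z$, and tests coefficients. Worse, your argument explicitly invokes Lemma~\ref{lem:SZ} (``so by Lemma~\ref{lem:SZ}, $\Pr[Q_i(\bar x^*)=0]\le |U|/|\mathbb{F}|$''), so as a proof of that very lemma it is circular: you assume the statement you were asked to establish.

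For the record, the paper itself gives no proof of this lemma; it is a classical result quoted from Schwartz and Zippel, so the expected content here would either be a citation or the standard short argument: induct on $n$, write $P=\sum_{i} P_i(x_1,\ldots,x_{n-1})\,x_n^{i}$, let $i^*$ be the largest index with $P_{i^*}\not\equiv 0$ (so $\deg P_{i^*}\le d-i^*$), bound the probability that $P_{i^*}(r_1,\ldots,r_{n-1})=0$ by $(d-i^*)/|\mathbb{F}|$ via the induction hypothesis, and otherwise note that $P(r_1,\ldots,r_{n-1},x_n)$ is a nonzero univariate polynomial of degree $i^*$ in $x_n$, which has at most $i^*$ roots, contributing at most $i^*/|\mathbb{F}|$; the union of the two events has probability at most $d/|\mathbb{F}|$. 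None of this appears in your write-up, so the attempt must be judged as missing the target statement entirely, even though the algorithmic material you do present is broadly consistent with how the paper later uses the lemma.
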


\begin{lemma}
\label{lem:polynomial interpolation}
Let $f(z)=c_r z^r + \ldots + c_0$ be a polynomial over a field of size at least $r+1$. 
For any $i \in [r]$, we can express the coefficient $c_i$ as a linear combination of $r+1$ evaluations of $f(z)$;
the expression can be found in time polynomial in $r+1$. 
\end{lemma}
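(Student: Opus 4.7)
The plan is to prove this by standard polynomial interpolation. The first step is to select $r+1$ distinct elements $\alpha_0, \alpha_1, \ldots, \alpha_r$ from the field $\mathbb{F}$, which is possible precisely because $|\mathbb{F}| \ge r+1$. Evaluating $f$ at these points yields the values $y_j = f(\alpha_j)$ for $j = 0, \ldots, r$, and since $\deg f \le r$, the relations $y_j = \sum_{i=0}^{r} c_i \alpha_j^i$ form a linear system of $r+1$ equations in the $r+1$ unknowns $c_0, \ldots, c_r$.

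I would then view this as the matrix equation $\vec{y} = V \vec{c}$, where $V$ is the $(r+1) \times (r+1)$ Vandermonde matrix with entries $V_{ji} = \alpha_j^i$. Its determinant $\prod_{j < k}(\alpha_k - \alpha_j)$ is nonzero because the $\alpha_j$ are distinct, so $V$ is invertible over $\mathbb{F}$. Consequently $\vec{c} = V^{-1} \vec{y}$, and in particular each coefficient $c_i$ is the inner product of the $i$-th row of $V^{-1}$ with the vector $(y_0, \ldots, y_r)$ of evaluations of $f$. Computing $V^{-1}$ (e.g., by Gaussian elimination) requires $O((r+1)^3)$ field operations, so the coefficients of the required linear combination can be produced in time polynomial in $r+1$, and crucially they depend only on the chosen $\alpha_j$ and not on $f$ itself.

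A more explicit equivalent route is to use Lagrange interpolation: setting $L_j(z) = \prod_{k \neq j}(z - \alpha_k)/(\alpha_j - \alpha_k)$, we get $f(z) = \sum_{j=0}^{r} y_j L_j(z)$, so $c_i = \sum_{j=0}^{r} y_j \cdot [z^i] L_j(z)$, where the scalars $[z^i] L_j(z)$ can be obtained by expanding each $L_j$ in $O((r+1)^2)$ field operations. There is no real obstacle here; the only non-routine ingredient is the existence of $r+1$ distinct evaluation points, which is supplied exactly by the hypothesis $|\mathbb{F}| \ge r+1$. In the application to Theorem~\ref{thm:main2}, this simply amounts to choosing the ambient characteristic-two field large enough, e.g.\ $\mathbb{F} = \mathrm{GF}(2^m)$ with $2^m \ge r+1$, which can be done without affecting the claimed $O^*(2^k)$ running time.
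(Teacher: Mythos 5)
Your proof is correct and follows essentially the same route as the paper's: choose $r+1$ distinct field elements (possible since $|\mathbb{F}|\ge r+1$), set up the Vandermonde system relating the coefficients to the evaluations, and invert the nonsingular Vandermonde matrix in polynomial time to express each $c_i$ as a linear combination of the values $f(\alpha_j)$. The extra Lagrange-interpolation remark is a harmless restatement of the same argument.
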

\begin{proof}
Let $z_0, \ldots, z_r$ be distinct elements of the field. 
Then the values of $g(z_i)$ for $i=0, \ldots, r$ collectively define the polynomial $g(z)$ entirely. 
In fact, the coefficients $c_i$ are the solutions to a linear system $BC=D$, where $B$ is the $(r+1) \times (r+1)$ Vandermonde matrix 
based on values $z_i$ (i.e., $B(i,j)=z_i^{j}$), $C$ is a column vector with $C(i)=c_i$ (indexed from $0$ to $r$), 
and $D$ is a column vector with $D(i)=g(z_i)$ (indexed similarly). 
Since $B$ is a Vandermonde matrix over distinct values $z_i$, $B$ is non-singular~\cite{HoJo1991}, hence
we may write $C=B^{-1} D$. This equation, in turn, defines each coefficient $c_i$ as a linear combination over evaluations $g(z_i)$. 
As $B^{-1}$ can be found in polynomial time, the complexity claim follows.
\end{proof}

Now Theorem \ref{thm:main2} follows from the next lemma and Theorems \ref{thm:Q} and \ref{thm:Qeval}.

\begin{lemma}
Let $G$ be an instance of \textsc{Eulerian Extension} with $n$ vertices. 
We can solve $G$ probabilistically using poly$(n+\ell)$ evaluations of $Q(\bar x, z)$, using elements of bit-length $O(\log n+\log \ell)$. 
\end{lemma}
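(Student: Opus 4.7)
The plan is to reduce the decision problem to univariate polynomial interpolation in $z$ combined with Schwartz-Zippel identity testing in the $\bar x$ variables. Write
\[
Q(\bar x, z) = \sum_{r \ge 0} Q_r(\bar x)\, z^r
\]
as a polynomial in $z$ with coefficients $Q_r(\bar x)$ in the polynomial ring over $\bar x$. Because $x(e_{u,I,w}) = x_{u,I,w}\, z^{\omega(e_{u,I,w})}$ and distinct edges of $G_2$ carry distinct indeterminates $x_{u,I,w}$, distinct perfect matchings $M$ of $G_2$ give rise to distinct $\bar x$-monomials. Combined with Theorem~\ref{thm:Q} (and the odd-multiplicity conclusion of Lemma~\ref{lem:discon} for $\rho=1$), this means a monomial appears in $Q_r(\bar x)$ over a field of characteristic two iff there is a perfect matching $M$ of $G_2$ with $G + EP(M)$ weakly connected and $\omega(EP(M)) = r$. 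Consequently, the input is a YES-instance of \textsc{Eulerian Extension} iff the polynomial
\[
R(\bar x) \;:=\; \sum_{r=0}^{\ell} Q_r(\bar x)
\]
is not identically zero; no cancellation across distinct matchings can occur.

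Next I would fix the arithmetic. Since each edge weight in $G_2$ is at most $\ell+1$ and any perfect matching has at most $n$ edges, the $z$-degree of $Q(\bar x, z)$ is bounded by $r_{\max} := n(\ell+1) = \mathrm{poly}(n+\ell)$; the total $\bar x$-degree of $R$ is at most $|U| \le n$. Pick a field $\mathbb{F} = \mathbb{F}_{2^q}$ of characteristic two with $|\mathbb{F}| \ge 4(r_{\max}+n)$, so that $q = O(\log n + \log \ell)$ and field elements have bit-length $O(\log n + \log \ell)$.

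To test whether $R \not\equiv 0$, sample $\bar x_0 \in \mathbb{F}^N$ uniformly at random. By Lemma~\ref{lem:SZ}, if $R \not\equiv 0$ then $R(\bar x_0) \ne 0$ with probability at least $1 - n/|\mathbb{F}| \ge 3/4$. It remains to evaluate $R(\bar x_0)$. Fixing $\bar x = \bar x_0$ leaves $Q(\bar x_0, z)$ as a univariate polynomial in $z$ of degree at most $r_{\max}$. Choose any $r_{\max}+1$ distinct elements $z_0, \ldots, z_{r_{\max}} \in \mathbb{F}$ (possible since $|\mathbb{F}| > r_{\max}$) and compute $Q(\bar x_0, z_j)$ for each $j$. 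By Lemma~\ref{lem:polynomial interpolation}, the coefficients $Q_0(\bar x_0), \ldots, Q_{r_{\max}}(\bar x_0)$ are recovered as explicit $\mathbb{F}$-linear combinations of these evaluations, after which $R(\bar x_0) = \sum_{r=0}^{\ell} Q_r(\bar x_0)$ is just a sum. The whole procedure issues $r_{\max}+1 = \mathrm{poly}(n+\ell)$ evaluations of $Q(\bar x, z)$, at arguments of bit-length $O(\log n + \log \ell)$, with additional polynomial-time Vandermonde overhead. A constant number of repetitions with fresh random $\bar x_0$ drives the error probability arbitrarily low.

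The main conceptual obstacle is the one already handled above: one must be sure that the characteristic-two field does not accidentally annihilate the YES-witness after restricting from $r \le r_{\max}$ to $r \le \ell$. The key is that the $\bar x$-monomials separate matchings, so neither the restriction to $r \le \ell$ nor the char-$2$ arithmetic can produce cancellations \emph{between} different connected matchings; within a single connected matching $M$, Lemma~\ref{lem:discon} already certifies odd multiplicity. Consequently, the sum $R(\bar x)$ vanishes identically exactly when no matching witnessing a YES-instance exists, and the Schwartz-Zippel/interpolation machinery yields the claimed randomized algorithm. Together with Theorems~\ref{thm:Q} and~\ref{thm:Qeval}, this proves Theorem~\ref{thm:main2}.
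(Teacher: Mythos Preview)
Your proof is correct and follows essentially the same approach as the paper: expand $Q$ in powers of $z$, instantiate the $\bar x$ variables at random over a characteristic-two field of size polynomial in $n+\ell$, and recover the low-degree $z$-coefficients by interpolation using $O(\ell n)$ evaluations. The only cosmetic difference is that the paper tests each coefficient $\alpha_i(\bar x)$ for $i\le \ell$ separately (using a union bound over Schwartz--Zippel), whereas you sum them into a single polynomial $R(\bar x)$ and apply Schwartz--Zippel once, having observed that distinct matchings carry distinct $\bar x$-monomials so no cancellation occurs in the sum; both variants yield the same bounds.
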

\begin{proof}
By Theorem~\ref{thm:Q}, the terms of $Q(\bar x, z)$ correspond exactly to perfect matchings $M$ of $G_2$ such that $G+EP(M)$ is Eulerian. 
Furthermore, for every such matching $M$, the exponent of $z$ in the corresponding term $\prod_{e \in M} x(e)$ 
equals the weight of $EP(M)$ whenever the latter is at most $\ell$ (otherwise, both of them are larger that $\ell$). Thus, our task is to decide whether $Q(\bar x, z)$ contains a term where the degree of $z$ is at most $\ell$. 
Let $L=O(\ell n^2)$ denote the maximum possible degree of $z$. 
Then, observe that we may rewrite $Q(\bar x, z)$ as $Q(\bar x,z) = \sum_{i=0}^L \alpha_i(\bar x) \cdot z^i$ by grouping terms;
hence our task is to decide whether there is $i \leq \ell$ such that $\alpha_i \not \equiv 0$. Note that each $\alpha_i$ is a polynomial in $\bar x$
of total degree at most $n/2$ (since the terms $x(e)$ come from perfect matchings $M$).


Instantiate $\bar x$ randomly from GF$(2^r)$ for some $r= \Omega(\log \ell + \log n)$.
By Lemma \ref{lem:SZ} and the union bound, with probability at least $1-1/\text{poly}(\ell+n)$,
we have $\alpha_i(\bar x)=0$ for $i\leq \ell$ if and only if $\alpha_i \equiv 0$. 
This instantiation defines a univariate polynomial $p(z)$ of maximum degree $L$, where we want to
decide whether there is a non-zero coefficient for $z^i$ for some $i\leq \ell$; this can be done via 
Lemma \ref{lem:polynomial interpolation} (at the cost of $L+1$ evaluations of $Q(\bar x, z)$). 
\end{proof}

\subsection{The undirected case}

With minor appropriate modifications (in particular exchanging bipartite matching and determinants
by general matchings and Pfaffians, see later)
we can also solve URPP. 

We now briefly outline the modifications required to handle the (seemingly more general) \textsc{Undirected Rural Postman Problem} (URPP).
The solution is very close to that of DRPP, with matchings in general graphs replacing bipartite graphs, and with Pfaffians replacing determinants. 
Let us define the problem properly.

\begin{center}
 \begin{namedefn}%
   {{\sc Undirected Rural Postman Problem (URPP)}}%
   {A connected multigraph $G=(V,E)$, 
   a subset $R$ of edges of \newline  $E$, 
   a weight function $\omega: E \rightarrow \N$, and 
   an integer $\ell$.}%
   {Is there a closed walk on $G$ containing every edge of $R$\newline
   with the total weight at most $\omega(R)+\ell$, where $\omega(R)$\newline is the total weight of $R$?}%
 \end{namedefn}
\end{center}

Recall that an undirected graph is Eulerian if and only if it is connected and every vertex is of even degree.
As for DRPP, we will solve URPP by working with the \textsc{Eulerian Extension} interpretation. 
A multiset $E'$ over the set $\{uv: u,v\in V\}$ of unordered pairs of vertices is an
{\em Eulerian extension (EE)} for $G$ if $G+E'=(V,E+E')$ is an Eulerian multigraph.

\begin{center}
 \begin{namedefn}%
   {{\sc Undirected Eulerian Extension (UEE)}}%
   {An undirected multigraph $G=(V,R)$ with no isolated \newline  vertices,
   a symmetric weight function $\omega: V\times V \rightarrow \mathbb{N}$,  and 
   \newline an integer $\ell$.}%
   {Is there an EE of total weight of most $\ell$?}%
 \end{namedefn}
\end{center}

The following is not hard to show. 

\begin{proposition}
There is a polynomial-time reduction from URPP, where the graph $G[R]$ of the URPP instance has $k$ connected components,
to UEE with a graph $G'$ with $k$ connected components, and where the weight function $\omega'$ obeys the triangle inequality. 
\end{proposition}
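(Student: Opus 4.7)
The plan is to mirror the DRPP-to-Eulerian Extension reduction sketched earlier in the paper, replacing ordered pairs by unordered pairs throughout. Starting from a URPP instance $(G=(V,E),R,\omega,\ell)$, I first build a complete weighted graph $K_V$ on $V$ by setting $\omega'(uv)$ to be the smallest weight of a parallel edge $uv$ in $E$ if such an edge exists, and $\ell+1$ otherwise; symmetry is automatic because we work with unordered pairs. I then replace each $\omega'(uv)$ by the weight of a minimum-weight path from $u$ to $v$ in $K_V$, which enforces the triangle inequality by construction and preserves symmetry. Finally, I discard every vertex that is not incident to any edge of $R$, leaving a graph $G'=(V',R)$ with the restricted weight function. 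All of this runs in polynomial time using all-pairs shortest paths on $K_V$.

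The component count is immediate: after the pruning step, the underlying edge set of $G'$ is $R$ and its vertex set is exactly the set of endpoints of $R$, so $G'$ coincides with the edge-induced subgraph $G[R]$ of the URPP instance and inherits its $k$ connected components.

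For the answer equivalence I argue both directions explicitly. Given a closed walk $W$ in $G$ of total weight at most $\omega(R)+\ell$ that traverses every edge of $R$, I delete one copy of each edge of $R$ from the edge sequence of $W$ and contract each maximal leftover sub-walk between consecutive $R$-traversals into a single virtual edge of $G'$ joining its endpoints; the $\omega'$-weight of this virtual edge is at most the $\omega$-weight of the sub-walk, by the shortest-path definition of $\omega'$. The resulting multiset $E'$ is an EE for $G'$, because connectivity and parity of all vertex degrees in $G'+E'$ are inherited from the closed-walk structure of $W$, and its total weight is at most $\ell$. Conversely, I expand every edge $uv$ of an EE $E'$ into a minimum-weight walk from $u$ to $v$ in $G$ (of $\omega$-weight $\omega'(uv)$) and concatenate these expansions along any Euler tour of $G'+E'$; this yields a closed walk in $G$ that contains every edge of $R$ and has total weight at most $\omega(R)+\ell$. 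Any edge of weight $\ell+1$ cannot appear in a feasible EE, so the fallback weights do not create spurious YES instances.

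The main subtlety I anticipate is the role of vertices in $V\setminus V'$: they must remain available as internal vertices of the shortest paths used to define $\omega'$, which is why shortest paths are computed before the pruning step. This also guarantees that the triangle inequality still holds on $V'\times V'$ after restriction, since a shortest path from $u$ to $v$ in $K_V$ dominates any concatenation through an intermediate vertex, whether or not that vertex lies in $V'$.
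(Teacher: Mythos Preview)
Your proposal is correct and follows exactly the route the paper intends: the proposition is stated without proof (``not hard to show''), and you have faithfully mirrored the DRPP-to-EE reduction from Section~2 in the undirected setting, supplying the answer-equivalence details the paper omits. The only minor point worth tightening is the case where a leftover sub-walk between two consecutive marked $R$-edges is closed (starts and ends at the same vertex of $V'$): there you should simply drop it rather than produce a loop, which is harmless for parity and connectivity on $V'$ since the remaining marked edges and virtual edges still concatenate into a closed walk covering all of $R$.
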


Henceforth, let $(G=(V,R),\omega,\ell)$ be the resulting instance of UEE,
let $O \subseteq V$ be the vertices of odd degree in $G$, 
let $V=V_1 \cup \ldots \cup V_k$ be the partition of $V$ induced by the connected components,
and let $O=O_1 \cup \ldots \cup O_k$ be the corresponding partition of $O$ (i.e., $O_i=O \cap V_i$). 
As for DRPP, we will refer to this set of connected components as $[k]$, 
identifying component $G[V_i]$ with the index $i$. 

\subsection{The no cycles transform}

Similarly as in the directed case, we give a transformation such that an optimal solution can be decomposed purely into paths.
For this, create two new vertices $v'$ and $v''$ for every $v \in V$, 
add edges $vv'$ and $vv''$ to $R$, and let $\omega(v,v')=\omega(v,v'')=\omega(v',v'')=0$. 
For any other edge $uv$, treat $v'$ and $v''$ as $v$, e.g., $\omega(u,v')=\omega(u,v)$. 
Keep $\ell$ unchanged. 

\begin{lemma}
After the above transformation, there exists a minimum weight EE which can
be partitioned into paths with endpoints in $O$, and in which every vertex of $O$
is an end-vertex of a unique such path.
\end{lemma}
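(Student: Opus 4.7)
The plan is to mirror the proof of Lemma~\ref{lemma:nocycles} with the appropriate undirected substitutions. Let $G'$ and $G''$ denote the graph before and after the transformation. I would start with a minimum weight EE $E'$ for $G'$ that is subset-minimal (no proper subset is an EE). Because $G'+E'$ is Eulerian, $E'$ itself has odd degree at every vertex of $O$ and even degree elsewhere, so it admits a decomposition $Q=Q_P\cup Q_C$ into a family $Q_P$ of paths, each with both endpoints in $O$ and each vertex of $O$ being an endpoint of a unique path, together with a family $Q_C$ of cycles.

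Next I would apply the forest/representative argument from the directed case essentially verbatim to $Q_C$. By the minimality of $E'$, for every $C \in Q_C$ the graph $G'+(E'\setminus C)$ is disconnected; picking a pair $e_C=\{u,v\}\subseteq V(C)$ with $u,v$ in different components of that graph and letting $H$ be the graph with edge set $\{e_C : C \in Q_C\}$, one observes that $H$ must be acyclic, hence a forest. Rooting each tree of $H$ and orienting its edges away from the root yields a system of distinct vertex representatives $v(C)$ for the cycles in $Q_C$.

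Then I would assemble $E''$ as follows: copy each path of $Q_P$ unchanged; for every $C=v_1v_2\cdots v_pv_1$ with $v(C)=v_1$, add the path $v_1''\,v_2\,v_3\,\cdots\,v_p\,v_1'$ obtained by relabeling the two occurrences of $v_1$ (weight-preserving, since $v_1'$ and $v_1''$ inherit the incidence weights of $v_1$); and for every $v \in V$ whose pair $v',v''$ has not yet been touched, add the zero-weight edge $\{v',v''\}$. A short check then shows that $\omega(E'')=\omega(E')$; that every vertex of $G''$ has even degree in $G''+E''$ (each pendant $v'$ or $v''$ receives exactly one extra incidence from $E''$, while at every original vertex any re-routed cycle removes an even number of incidences); and that $G''+E''$ is connected, since each pendant remains attached to its progenitor $v$ through the $R$-edges $\{v,v'\},\{v,v''\}$ while the rerouting only preserves or improves reachability between components. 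By construction $E''$ admits the claimed pure-path decomposition with endpoints at the odd-degree vertices of $G''$.

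For the optimality half, given any EE $E^\ast$ for $G''$, I would discard every edge of the form $\{v',v''\}$ from $E^\ast$ and map each remaining edge incident to $v'$ or $v''$ onto $v$; this operation corresponds to contracting every triangle $\{v,v',v''\}$ back to a single vertex, preserves both vertex parities and weak connectivity, and yields an EE for $G'$ of weight at most $\omega(E^\ast)$. Consequently the minimum EE weights of $G'$ and $G''$ agree, so the EE $E''$ constructed above is itself of minimum weight. I expect the main obstacle to be the parity bookkeeping at original vertices that are simultaneously cycle representatives and endpoints of paths in $Q_P$; the forest argument and the connectivity checks are straightforward translations of their directed counterparts.
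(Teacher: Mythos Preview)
Your proposal is correct and follows essentially the same route as the paper's proof: take a subset-minimal minimum-weight EE for $G'$, decompose it into paths with endpoints in $O$ plus cycles, appoint distinct representatives for the cycles via the forest argument, reroute each cycle into a $v'$--$v''$ path, pad the remaining pendants with zero-weight $\{v',v''\}$ edges, and undo the transformation for the reverse direction. The only substantive difference is that the paper gives an explicit procedure for the decomposition step (repeatedly extract cycles, then extract leaf--leaf paths from the remaining forest) and argues from it that each vertex of $O$ is the endpoint of exactly one path, whereas you simply assert that such a decomposition exists; this is fine, but you should be prepared to justify it. Two cosmetic points: in $G''$ there is no triangle $\{v,v',v''\}$ in $R$ (only the two edges $vv'$ and $vv''$), so ``contracting the triangle'' should read ``identifying $v,v',v''$''; and ``weak connectivity'' is just ``connectivity'' in the undirected setting.
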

\begin{proof}
The proof proceeds as that of Lemma~\ref{lemma:nocycles}.
Let $G'$ and $G''$ denote $G$ before and after the above transformation, and let $E'$ be a minimum weight EE for $G'$.
Assume w.l.o.g. that $E'$ is minimal, i.e., no subset of $E'$ is an EE. 
Partition $E'$ into a collection $Q$ of paths and cycles as follows. As long as any cycles remain in $E'$,
extract one cycle and put it in $Q$. Thereafter, when $E'$ is acyclic, repeatedly extract paths of maximum length,
e.g., leaf-leaf paths in the forest formed by the remaining edges of $E'$.
Observe that every vertex of $O$ is an end-vertex of a unique path in $Q$:
It is an end-vertex of at least one path 
since it has odd degree (the parity of a vertex $x$ is not changed when cycles through $x$ or paths in which $x$ is not an end-vertex, are deleted), 
and at most one path since every path is a leaf-leaf path at the time of its extraction. 

As in the proof of Lemma~\ref{lemma:nocycles}, we can show that the cycles of $Q$ admit a set of distinct representatives
(the proof of this goes through unchanged). We create an EE $E''$ for $G''$ as follows. 
For every cycle $C$ of $Q$, with representative vertex $v$, we create a path $P$ from $v'$ to $v''$, of the same cost as $C$, 
and add it to $E''$. For every path $P$ of $Q$, we simply add the edges of $P$ to $E''$. Finally,
for any vertex $v$ such that $v'$ and $v''$ still have odd degree, add the edge $v'v''$ to $E''$ (of weight 0). 
It is easy to see that $G''+E''$ forms an Eulerian multigraph, and that the weight of $E''$ equals the weight of $E'$. 
In the other direction, it is also easy to transfer any (arbitrary) EE for $G''$ into an EE for $G$ 
of the same weight. 
\end{proof}

The following still holds. 

\begin{lemma}
Assume a  minimum weight EE $E'$ is partitioned into paths as above. Let $P$ be
a path of the partition, starting at a vertex $u$, terminating at a vertex
$v$, and passing through a set $I \subseteq [k]$ of components. Let $P'$
be a minimum weight path from $u$ to $v$, subject to the constraint that $P'$ is
incident to each component $i \in I$. Then a multiset $E''$ obtained from $E'$ 
by replacing the edges of $P$ by the edges of $P'$ is a minimum weight EE, too.
\end{lemma}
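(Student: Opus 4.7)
The plan is to mirror the argument used for the analogous lemma in the directed case, adapting it to parities and undirected connectivity. There are two things to verify: that $E''$ is again an EE for $G$, and that the weight of $E''$ does not exceed that of $E'$ (which, combined with the minimality of $E'$, yields minimum-weight-ness).

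First, I would check the weight inequality, which is the easy half. The original path $P$ goes from $u$ to $v$ and is incident to each component $i \in I$, so $P$ itself is a feasible competitor in the minimization defining $P'$; hence $\omega(P') \le \omega(P)$. Since $E''$ is obtained from $E'$ by swapping the edges of $P$ for those of $P'$, we get $\omega(E'') \le \omega(E')$. Thus if $E''$ is an EE, it must itself be of minimum weight.

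Next comes the structural check that $E''$ is an EE, i.e., that $G+E''$ is connected and every vertex has even degree. For the degree parities, both $P$ and $P'$ are $u$-$v$ walks, so each of them contributes $1$ to the degree of $u$ and of $v$ and an even contribution (two per visit) to every other vertex; consequently the parity of every vertex in $G+E''$ agrees with that in $G+E'$, which is even. For connectivity, observe that the edges of $E' \setminus P$ together with $G$ already span each individual component $V_j$ (connectivity inside a component is inherited from $G$), and that $P'$, being incident to every component $i \in I$, reattaches exactly the set of components that $P$ glued together; any component $j \notin I$ that was connected to the rest via edges of $E' \setminus P$ remains so. Hence $G+E''$ is connected, so $E''$ is an EE.

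The only subtle point, and the one I would write most carefully, is the connectivity argument: one needs to note that removing $P$ from $E'$ can only disconnect components whose sole link to the rest of $G+E'$ passes through $P$, and then argue that $P'$ restores that linkage because $P'$ visits the same set $I$ of components and shares the endpoints $u$ and $v$ with $P$. Everything else is either a direct parity computation or a one-line comparison of weights, so I expect no real obstacle beyond that connectivity bookkeeping.
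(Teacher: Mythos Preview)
Your proof is correct. The paper itself offers no proof for this lemma (nor for its directed analogue), treating both as easy observations; your write-up supplies precisely the three checks one needs---the weight bound $\omega(P')\le\omega(P)$ from the minimality of $P'$, the parity preservation since $P$ and $P'$ have the same endpoints, and the connectivity argument via the component set $I$---and each is handled soundly.

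One small wording point: when you say $P'$ ``reattaches exactly the set of components that $P$ glued together,'' note that $P'$ may in fact touch components outside $I$ as well (the constraint is only $I\subseteq$ components of $P'$), so ``at least'' would be more accurate than ``exactly''; this only helps connectivity, so the argument is unaffected.
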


Now, for every $u, v \in O$ with $u \in V_i$ and $v \in V_j$ (where we may have $i=j$),
and for every $I$ with $ \{i,j\} \subseteq I \subseteq [k] $, let $P(u,I,v)$ represent an (arbitrary, but fixed) 
minimum weight path from $u$ to $v$ passing through the component set $I$;
again we may assume that $P(u,I,v)$ contains exactly $|I|$ vertices.
We assume that $P(u,I,v)$ and $P(v,I,u)$ refer to the same path. 


\subsection{Matching form}

Let $G_O$ refer to the weighted multigraph on vertex set $O$, where for every distinct
path $P(u,I,v)$ created in the end of the previous subsection there is an edge $e_{u,I,v}$ between $u$ and $v$ in $G_O$ representing the path. 
The weight of $e_{u,I,v}$ equals the weight of the path $P(u,I,v)$ if it is at most $\ell$, otherwise it equals $\ell+1$. 

Given a perfect matching $M$ in $G_O$, we define a multiset $EP(M)$ of edges as follows. 
Initially $EP(M)=\emptyset$, then for every edge $e_{u,I,w}\in M$ we add the edges of the path $P(u,I,w)$ to $EP(M)$,
creating multiple copies of some edges, if necessary. The following is a simple observation.

\begin{lemma}\label{lem:UEM}
An instance of {\sc UEE} is positive if and only if there is a perfect matching
$M$ in the above multgraph such that $G+EP(M)$ is connected and the total weight of $EP(M)$ is at most $\ell.$
\end{lemma}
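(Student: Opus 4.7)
The plan is to mirror the proof strategy of Lemma~\ref{lem:EM} from the directed case, combining the no-cycles transformation, the path-replacement lemma, and the definition of $G_O$.

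For the forward direction (UEE positive implies the matching exists), suppose an EE $E$ for $G$ has weight at most $\ell$. The preceding no-cycles lemma lets me assume without loss of generality (and without increasing weight) that $E$ decomposes into a family of edge-disjoint paths, each with endpoints in $O$ and such that every vertex of $O$ is the endpoint of exactly one path. The subsequent path-replacement lemma then allows me to substitute each such path from $u$ to $v$ through component set $I$ by the canonical minimum-weight path $P(u,I,v)$: the substitution does not increase the total weight, and because $P(u,I,v)$ visits exactly the same set of components, the connectivity of $G+E$ is preserved. Collecting the corresponding edges $e_{u,I,v}$ of $G_O$ yields a perfect matching $M$ of $G_O$ (each vertex of $O$ is covered exactly once) with $EP(M)$ of total weight at most $\ell$ and $G+EP(M)$ connected.

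For the converse, suppose $M$ is a perfect matching of $G_O$ with $G+EP(M)$ connected and total weight at most $\ell$. I need to verify that $EP(M)$ is an EE, i.e., that every vertex has even degree in $G+EP(M)$; connectivity is given. Each edge $e_{u,I,v}\in M$ contributes the edge set of the path $P(u,I,v)$ to $EP(M)$; along this path $u$ and $v$ each receive an odd degree contribution while every internal vertex receives an even contribution. Summing over all edges of $M$: since $M$ is a perfect matching on $O$, each vertex of $O$ is the endpoint of exactly one of the chosen paths (plus possibly internal to several others), giving it an odd total contribution; every vertex outside $O$ is the endpoint of no path in the collection, giving it an even total contribution. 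Combined with the parities of degrees in $G$ (odd on $O$, even elsewhere), every vertex of $G+EP(M)$ ends up with even degree, so $G+EP(M)$ is Eulerian and $EP(M)$ is the desired EE of weight at most $\ell$.

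I do not anticipate any real obstacle; the only point needing care is that canonical paths may share internal vertices, so $EP(M)$ is in general a genuine multiset with parallel edges, and one must not conflate ``vertex lies on several paths'' with ``vertex is endpoint of several paths''. Since the parity argument counts degree contributions per path-edge incidence, the multiset structure is handled automatically.
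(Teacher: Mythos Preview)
Your proof is correct and is precisely the natural unpacking of what the paper leaves as ``a simple observation'' (no proof is given in the paper for either Lemma~\ref{lem:EM} or Lemma~\ref{lem:UEM}). The forward direction via the no-cycles lemma followed by iterated path replacement, and the backward direction via the parity count on path endpoints versus internal vertices, are exactly the intended arguments.
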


\subsection{Sieving for connected solutions}
\label{sec:pfaffian}

For the next step, we will need to compute the \emph{Pfaffian} of a matrix; let us briefly recall some definitions. 
Let $B=[b_{ij}]$ be a skew-symmetric (defined as $B^T=-B$) $2q\times 2q$ matrix over a field $\mathbb{F}$. For each partition $P=\{\{i_1,j_1\},\ldots ,\{i_q,j_q\}\}$ of $[2q]$ into pairs, let $b_P=\sgn (\pi) b_{i_1j_1}\cdots b_{i_qj_q}$, where $\sgn (\pi)$ is the sign of permutation  $\pi=(i_1,j_1,\ldots ,i_q,j_q)$.
The  \emph{Pfaffian} of $B$, denoted $\pf B$, is the sum $\sum_P b_P$ over all partitions $P$ of $[2q]$ into pairs. For more information on Pfaffians of matrices, see \cite{LovPlu}.

Let $G=(V,E)$ be a graph; for convenience let $V=[n]$. We may assume that $n$ is even.
The Tutte matrix of $G$ is a $|V| \times |V|$ matrix $A$ such that 
$A(i,j) = x_{i,j}$ if $i<j$ and $ij \in E$, $A(i,j)=-x_{j,i}$ if $i>j$ and $ji\in E$,
and $A(i,j)=0$ otherwise. The matrix $A$ is skew-symmetric. 
Throughout, all variables $x_{i,j}$ are distinct indeterminates.

Over a field of characteristic two, we can drop the signs in the definition of the Pfaffian, and get
\[
\pf A = \sum_M \prod_{e \in M} x_e,
\]
where $M$ ranges over all perfect matchings of $G$.

It is well-known that the Pfaffian of a matrix can be computed in polynomial time (see, e.g., \cite{Rote2001}).

We now proceed with our algorithm.
For every edge $e_{u,I,v}$ of $G_O$, let $x(e_{u,I,v})=x_{u,I,v}z^r$ where $x_{u,I,v}$ is a new indeterminate,
$z$ is a common indeterminate, and $r$ equals the weight of $e_{u,I,w}$.
For every $I \subseteq [k]$, denote $V_I=\bigcup_{i \in I} V_i$, 
and let $A_I$ be a matrix over a field of characteristic two, with rows and columns
indexed by $O \cap V_I$, and with entries
$
A_I(u,v)=\sum_{J\subseteq I} x(e_{u,J,v}).
$
Recall that the graph is undirected, i.e., $e_{u,I,v}$ and $e_{v,I,u}$ are considered as the same edge,
hence $x_{u,I,v}$ and $x_{v,I,u}$ are the same variables, and $A_I$ is a symmetric matrix for each $I$. 
Furthermore, since $A_I$ is over characteristic two, it is also skew-symmetric (i.e., $A_I^T=-A_I$)
with a zero diagonal. Hence, given an instantiation of $A_I$
we can compute the Pfaffian of $A_I$ in polynomial time. 
Now define 
\[
Q(\bar x, z):=\sum_{I \subseteq [k]\setminus\{1\}} (\pf{A_I})(\pf{A_{[k] \setminus I}}).
\]
We let $\pf A_{\emptyset}=1$. 
As before, we show that $Q(\bar x,z)$ enumerates exactly those perfect matchings $M$
for which $G+EP(M)$ is connected. 

The following lemma is immediate from the definition of the Pfaffian. 

\begin{lemma}
\label{lemma:U1}
Let $I \subseteq [k]$. Let $G_{O,I}=(O \cap V_I, E_I)$
denote the multigraph with edges $e_{u,J,w}$ from $G_O[V_I]$ for which $J \subseteq I.$
Then 
\[
\pf A_I=\sum_{M_I} \prod_{e \in M_I} x(e), \mbox{ where $M_I$ ranges over perfect matchings of $G_{O,I}.$}
\]
\end{lemma}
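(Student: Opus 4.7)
The plan is to prove this by directly unfolding the definition of the Pfaffian and distributing over the sums that define the entries $A_I(u,v)$; this is essentially the standard ``Pfaffian counts perfect matchings in characteristic two'' identity, but in the multigraph setting where each matrix entry aggregates several parallel edges. First I would dispatch the trivial case: if $|O \cap V_I|$ is odd, then $G_{O,I}$ admits no perfect matching, and $\pf A_I = 0$ by the usual convention for odd-size skew-symmetric matrices, so both sides vanish. Henceforth assume $|O \cap V_I|=2q$.

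Next I would recall the formula used in the preamble: for a skew-symmetric $2q \times 2q$ matrix $B$ over a field of characteristic two, the sign $\sgn(\pi)$ in each term is $\pm 1$ and hence equals $1$, so
\[
\pf B = \sum_{P} \prod_{\{i,j\} \in P,\, i<j} b_{ij},
\]
with $P$ ranging over partitions of the index set into pairs. Applying this with $B=A_I$ and expanding each entry via its definition $A_I(u,v)=\sum_{J \subseteq I} x(e_{u,J,v})$ gives
\[
\pf A_I \;=\; \sum_{P} \prod_{\{u,v\} \in P} \Bigl(\sum_{J \subseteq I} x(e_{u,J,v})\Bigr) \;=\; \sum_{P} \sum_{(J_{uv})_{\{u,v\}\in P}} \prod_{\{u,v\} \in P} x(e_{u,J_{uv},v}),
\]
where the inner sum ranges over all assignments of a subset $J_{uv} \subseteq I$ to each pair in $P$.

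The decisive step is then a clean bijection: a pair (partition $P$ of $O \cap V_I$ into pairs, choice of a subset $J_{uv} \subseteq I$ for each pair) corresponds to a unique perfect matching $M_I$ of $G_{O,I}$, obtained by selecting for each $\{u,v\}\in P$ the parallel edge $e_{u,J_{uv},v}$ in $G_{O,I}$, and conversely every perfect matching of $G_{O,I}$ arises this way. Under this correspondence the monomial $\prod_{\{u,v\}\in P} x(e_{u,J_{uv},v})$ becomes $\prod_{e \in M_I} x(e)$, and the double sum collapses to $\sum_{M_I} \prod_{e \in M_I} x(e)$ as claimed.

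I expect no serious obstacle; this is essentially a bookkeeping argument. The only points that need a little care are: (i) ensuring the multi-edge accounting is right, namely that distinct subsets $J \subseteq I$ with $\{i,j\}\subseteq J$ correspond to distinct parallel edges of $G_{O,I}$ between a fixed pair of endpoints, which is precisely how $G_{O,I}$ and the entries of $A_I$ were defined; and (ii) invoking characteristic two both to kill the permutation signs in the Pfaffian and to reconcile the symmetric definition $x_{u,I,v}=x_{v,I,u}$ with the skew-symmetry $A_I^T=-A_I$ and the zero diagonal required for the Pfaffian to behave as stated.
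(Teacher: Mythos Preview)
Your proposal is correct and is exactly the argument the paper has in mind: the paper states that the lemma is ``immediate from the definition of the Pfaffian'' and gives no further proof, and what you have written is precisely that unfolding, carried out carefully for the multigraph setting in characteristic two.
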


The remaining proofs proceed almost exactly as for the directed case.

\begin{lemma}\label{lem:uonly-matchings}
Every term of $Q(\bar x,z)$ equals $\prod_{e \in M} x(e)$ for some perfect
matching $M$ of $G_O$, where $x(e)$ for $e=e_{u,I,v}$ equals $x(e_{u,I,v})$.
\end{lemma}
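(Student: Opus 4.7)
The plan is to mirror the proof of Lemma~\ref{lem:only-matchings} almost verbatim, substituting the Pfaffian expansion of Lemma~\ref{lemma:U1} for the determinantal expansion of Lemma~\ref{lemma:1}. The structural reason this works is that $V_I$ and $V_{[k] \setminus I}$ partition $V$, so the induced sets $O \cap V_I$ and $O \cap V_{[k] \setminus I}$ partition $O$; thus perfect matchings of $G_{O,I}$ and $G_{O,[k]\setminus I}$ combine edge-disjointly into perfect matchings of $G_O$.

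Concretely, I would first handle the term $I = \emptyset$, where $(\pf A_{\emptyset})(\pf A_{[k]}) = \pf A_{[k]}$, and Lemma~\ref{lemma:U1} says directly that every term of $\pf A_{[k]}$ has the form $\prod_{e \in M} x(e)$ for some perfect matching $M$ of $G_O = G_{O,[k]}$. Then, for each nonempty $I \subseteq [k]\setminus\{1\}$, I would pick an arbitrary term of the product $(\pf A_I)(\pf A_{[k] \setminus I})$. By Lemma~\ref{lemma:U1}, such a term factors as $(\prod_{e \in M_1} x(e))(\prod_{e \in M_2} x(e))$, with $M_1$ a perfect matching of $G_{O,I}$ and $M_2$ a perfect matching of $G_{O,[k]\setminus I}$. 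The partition observation above ensures that $M_1$ and $M_2$ are edge-disjoint and that $M := M_1 \cup M_2$ is a perfect matching of $G_O$, so the term equals $\prod_{e \in M} x(e)$, as required. Summing over all $I \subseteq [k]\setminus\{1\}$ in the definition of $Q(\bar x, z)$ completes the argument.

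I do not expect any serious obstacle: the only point worth flagging is that in the undirected setting the edges $e_{u,I,v}$ and $e_{v,I,u}$ of $G_O$ are identified, and the variables $x_{u,I,v}$ and $x_{v,I,u}$ denote the same indeterminate. This identification is already absorbed into the symmetric (and, over characteristic two, skew-symmetric with zero diagonal) definition of $A_I$, and therefore into Lemma~\ref{lemma:U1}; in particular it is what makes $\pf A_I$ well-defined and makes the factorisation $M = M_1 \cup M_2$ unambiguous. No additional bookkeeping beyond what was already done in the directed case is needed, and one may work throughout over a fixed field of characteristic two so that signs in the Pfaffian drop out, matching the clean sum-over-matchings form used above.
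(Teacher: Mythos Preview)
Your proposal is correct and follows essentially the same approach as the paper: both invoke Lemma~\ref{lemma:U1} to expand $\pf A_I$ and $\pf A_{[k]\setminus I}$ as sums over perfect matchings of $G_{O,I}$ and $G_{O,[k]\setminus I}$, then use the partition of $V$ (hence of $O$) to combine $M_1\cup M_2$ into a perfect matching of $G_O$. The paper's proof is slightly terser in that it does not separate out the $I=\emptyset$ case, but the argument is otherwise identical.
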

\begin{proof}
Let $I \subseteq [k]\setminus \{1\}$. By Lemma~\ref{lemma:U1}, 
each term in both $\pf A_I$ and $\pf A_{[k] \setminus I}$ corresponds to a perfect matching of
$G_{O,I}$ and $G_{O,[k] \setminus I}$, respectively. Let $M_1$ and $M_2$ be arbitrary perfect
matchings of the respective graphs. Since $V_I$ and $V_{[k] \setminus I}$
partition $V$, we have that $M_1$ and $M_2$ are disjoint and 
$M=M_1 \cup M_2$ forms a perfect matching of $G_O$. Furthermore, 
$(\prod_{e \in M_1} x(e))(\prod_{e \in M_2} x(e))=\prod_{e \in M} x(e)$.
Thus for every selection $I$, each term of $Q$ generated equals 
$\prod_{e \in M} x(e)$ for some perfect matching $M$, and $Q(\bar x,z)$
as a whole thus enumerates perfect matchings in this sense.
\end{proof}

\begin{lemma}\label{lem:udiscon}
Let $M$ be a perfect matching of $G_O$ such that $G+EP(M)$ has $\rho$ connected components. 
Then the term $\prod_{e \in M} x(e)$ is enumerated $2^{\rho-1}$ times
in $Q(\bar x, z)$. 
\end{lemma}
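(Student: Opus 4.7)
The plan is to mirror the proof of Lemma~\ref{lem:discon} almost verbatim, replacing determinants by Pfaffians and appealing to Lemma~\ref{lemma:U1} in place of Lemma~\ref{lemma:1}. The key reason no new idea is required is that, over a field of characteristic two, the Pfaffian enumerates perfect matchings exactly as the determinant enumerates perfect matchings of a bipartite graph, with no sign complications. The whole argument is combinatorial: it concerns only how edges of $M$ interact with the partition $[k]=I_1\cup\ldots\cup I_\rho$ induced by the components of $G+EP(M)$, and directedness plays no role.

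I would first handle the case $\rho=1$. Here $\prod_{e\in M} x(e)$ is produced once by the term $I=\emptyset$: by Lemma~\ref{lemma:U1}, $\pf A_{[k]}$ contains $\prod_{e\in M} x(e)$, and $\pf A_{\emptyset}=1$. For any nonempty $I\subseteq[k]\setminus\{1\}$, I would argue that, since $G+EP(M)$ is connected, there must exist an edge $e_{u,J,v}\in M$ whose associated path $P(u,J,v)$ crosses the partition $(V_I,V_{[k]\setminus I})$; this forces either $\{u,v\}$ or some index in $J$ to lie on the opposite side from the rest, so $e_{u,J,v}$ cannot appear in $G_{O,I}$ nor in $G_{O,[k]\setminus I}$ (it requires $J\subseteq I$ or $J\subseteq[k]\setminus I$ respectively), blocking any generation of $\prod_{e\in M}x(e)$ through $\pf A_I \cdot \pf A_{[k]\setminus I}$.

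For $\rho\ge 2$, each $V_i$ is entirely contained in a unique component $C_h$ of $G+EP(M)$, giving a partition $[k]=I_1\cup\ldots\cup I_\rho$ with $1\in I_1$. The crucial observation, which I expect to be the main (but modest) obstacle to verify, is that every edge $e_{u,J,v}\in M$ is ``captured'' by a single $I_h$: both endpoints $u,v$ and every index $j\in J$ satisfy $u,v,V_j\subseteq V(C_h)$, since $P(u,J,v)$ is a connected sub-path of $G+EP(M)$ visiting all of these. Consequently $M$ decomposes as $M=M_{I_1}\cup\ldots\cup M_{I_\rho}$ with $M_{I_h}$ a perfect matching of $G_{O,I_h}$. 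For any $J\subseteq[\rho]$ with $1\in J$, set $I_J:=\bigcup_{j\in J}I_j$; then $M$ splits as a perfect matching $M_1$ of $G_{O,I_J}$ together with a perfect matching $M_2$ of $G_{O,[k]\setminus I_J}$, and by Lemma~\ref{lemma:U1} the term $\prod_{e\in M_1}x(e)$ appears in $\pf A_{I_J}$, while $\prod_{e\in M_2}x(e)$ appears in $\pf A_{[k]\setminus I_J}$; multiplying gives one contribution of $\prod_{e\in M}x(e)$ to the summand for $I=[k]\setminus I_J$ in $Q(\bar x,z)$.

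Finally, I would show, again mimicking the Case~$\rho=1$ argument, that no $I\subseteq[k]\setminus\{1\}$ which is \emph{not} of the form $[k]\setminus I_J$ for some $J\subseteq[\rho]$ containing $1$ can produce $\prod_{e\in M}x(e)$: any such $I$ splits some component $I_h$, forcing some edge of $M_{I_h}$ to straddle the partition, which as before is inadmissible in either Pfaffian. Since there are exactly $2^{\rho-1}$ subsets $J\subseteq[\rho]$ with $1\in J$, the term $\prod_{e\in M}x(e)$ is enumerated exactly $2^{\rho-1}$ times in $Q(\bar x,z)$, completing the proof.
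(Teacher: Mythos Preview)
Your proposal is correct and follows essentially the same route as the paper's own proof: both treat the case $\rho=1$ by noting that any nonempty $I$ is crossed by some path of $EP(M)$, and for $\rho\ge 2$ both use the partition $[k]=I_1\cup\cdots\cup I_\rho$ induced by the components of $G+EP(M)$ (with $1\in I_1$), identify the valid summands as exactly those $I=[k]\setminus I_J$ for $1\in J\subseteq[\rho]$, and count $2^{\rho-1}$ such choices. If anything, your write-up is slightly more explicit than the paper's (e.g., naming the summand $I=[k]\setminus I_J$ and spelling out why an $I$ not of this form must split some $I_h$).
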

\begin{proof}
Let $\rho=1$. Clearly $M$ is generated in the term corresponding to $I=\emptyset$ (in
the second part of the product). We show that no other choice of $I$
generates $M$. Let $I \subseteq [k]\setminus \{1\}$ be non-empty. Since $G+EP(M)$ is
connected, there is some edge $e_{u,J,v}$, $J\subseteq I$, in $M$ such that either $u$ and
$v$ lie on different sides of the partition $(V_I, V_{[k]\setminus I})$,
or $u,v$ lie on the same side but some $j \in J$ lies on the other side.
In both cases, the edge $e_{u,J,v}$ cannot be generated in either $\pf
A_I$ or $\pf A_{[k] \setminus I}$. 

Now we may assume that $\rho\ge 2$.
Let $C_1, \ldots, C_{\rho}$ be the connected components of $G+EP(M)$ and observe that 
for each $i\in [k]$, the vertex set $V_i$ is contained in a single component $C_j$. 
Let $[k]=I_1 \cup \ldots \cup I_{\rho}$ denote the partition of $[k]$ according to the components $C_i$,
i.e., $I_i=\{j \in [k]: V_j \subseteq V(C_i)\}$; choose the numbering
so that $1 \in I_1$.  For $J \subseteq [\rho]$, let $I_J:=\bigcup_{j \in J} I_j$. 

Let $1\in J\subseteq [\rho]$.
Observe that $M$ partitions into one perfect
matching $M_1$ for $G_{O,I_J}$ and one perfect matching $M_2$ for 
$G_{O, [k] \setminus I_J}$.
Since $\prod_{e \in M_1} x(e)$ is generated in $\pf A_{I_J}$, and $\prod_{e \in M_2} x(e)$ is generated in $\pf
A_{[k] \setminus I_J}$, $\prod_{e \in M} x(e)$ is generated (exactly once)
in their product. By an argument similar to the one used in the first paragraph of this proof, we can see that
$M$ can be generated only as above. Since there are exactly $2^{\rho -1}$ ways to choose a subset $J$ of $[\rho]$ containing 1, we are done.
\end{proof}

Thus, we have the following:

\begin{theorem}\label{thm:QU}
Over a field of characteristic two, $Q(\bar x,z)$ enumerates exactly those perfect matchings $M$
for which $G+ EP(M)$ is weakly connected. 
\end{theorem}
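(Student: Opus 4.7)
The plan is to derive Theorem \ref{thm:QU} directly from Lemmas \ref{lem:uonly-matchings} and \ref{lem:udiscon}, in exact parallel with how Theorem \ref{thm:Q} follows from Lemmas \ref{lem:only-matchings} and \ref{lem:discon} in the directed case. The underlying phenomenon is the standard parity sieving trick: every term we generate corresponds to a perfect matching of $G_O$, and each matching is generated with a multiplicity that is a power of two, which vanishes modulo $2$ unless the exponent is zero.

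More concretely, I would first invoke Lemma \ref{lem:uonly-matchings} to conclude that every monomial appearing in $Q(\bar x, z)$ has the form $\prod_{e \in M} x(e)$ for some perfect matching $M$ of $G_O$. Since the variables $x_{u,I,v}$ are algebraically independent, matchings give rise to distinct monomials and we may count the multiplicity of each matching separately. Next, I would apply Lemma \ref{lem:udiscon}: for a perfect matching $M$ such that $G+EP(M)$ has $\rho$ connected components, the monomial $\prod_{e \in M} x(e)$ occurs exactly $2^{\rho-1}$ times in the sum defining $Q(\bar x,z)$. Working over a field of characteristic two, $2^{\rho-1} \equiv 0$ whenever $\rho \geq 2$, while $2^{\rho-1}=1$ when $\rho=1$. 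Hence matchings producing disconnected extensions cancel out completely, and only matchings $M$ for which $G+EP(M)$ is connected contribute (each with coefficient $1$).

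There is essentially no new obstacle here; the combinatorial heart of the argument is already packaged into Lemma \ref{lem:udiscon}, whose proof does the work of showing that the set of valid ``splittings'' of $M$ between $A_{I_J}$ and $A_{[k]\setminus I_J}$ is in bijection with the subsets $J \subseteq [\rho]$ containing the fixed index $1$. The only thing to check in addition is that the Pfaffian-based Lemma \ref{lemma:U1} plays exactly the same role as the determinant-based Lemma \ref{lemma:1}, which is immediate because in characteristic two the Pfaffian of a skew-symmetric matrix $A$ with zero diagonal enumerates perfect matchings of the associated graph without signs, just as the determinant of a bipartite adjacency matrix does in the bipartite setting. Finally, I would note (as a small correction to the statement) that ``weakly connected'' in the undirected setting simply means connected, so the theorem asserts: $Q(\bar x,z)$ enumerates exactly those perfect matchings $M$ of $G_O$ for which $G+EP(M)$ is connected.
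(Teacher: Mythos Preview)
Your proposal is correct and matches the paper's approach exactly: the paper presents Theorem~\ref{thm:QU} as an immediate consequence of Lemmas~\ref{lem:uonly-matchings} and~\ref{lem:udiscon} (it writes only ``Thus, we have the following:'' before stating the theorem, with no further proof). Your additional remarks about distinct monomials and the meaning of ``weakly connected'' in the undirected setting are accurate but not needed beyond what the lemmas already provide.
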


\subsection{Fast evaluation}

The procedure for evaluating $Q(\bar x, z)$ in time $O^*(2^k)$, given an instantiation of the variables $\bar x$ and $z$,
is again close to that for the directed case. First, for $u, v \in O$ and $I \subseteq [k]$, let $d(u,I,v)$
denote the length of the path $P(u,I,v)$ (where we assume that $u$ and $v$ are not contained in $V_i$ for any $i \in I$). 
We observe that we can compute the values $d(u,I,v)$ for all such triples $(u,I,v)$ in time $O^*(2^k)$ -- in fact, 
we can simply reuse the procedure for the directed case, since we are currently dealing with the special case
when $\omega$ is symmetric. 
Next, we create a table for the concrete values of $x(e_{u,I,v})=x_{u,I,v}z^r$
using the values of $d(u,I,v)$ and the given values for $\bar x$ and $z$. 
Finally, for each pair $u,v \in O$, we simultaneously compute $A_I(u,v)$ for all $I \subseteq [k]$,
since $A_I(u,v)$ is the zeta transform of $x(e_{u,I,v})$. 
We thus find that we can evaluate $Q(\bar x, z)$ in time $O^*(2^k)$, given values for variables $\bar x$ and $z$. 
We can now finish the proof of the following result. (Note that by Theorem~\ref{thm:QU}, 
the problem that remains to be solved in Theorem~\ref{thm:main2U} is entirely algebraic,
and there is no distinction between the UEE and the EE case except the definition
of the underlying polynomial $Q$; hence we omit the proof.)

\begin{theorem}\label{thm:main2U}
The \textsc{UEE} problem with $k$ components in $G$ and $\ell$ bounded by a polynomial in the number of vertices in $G$
can be solved by a randomized algorithm in $O^*(2^k)$ time and space.
\end{theorem}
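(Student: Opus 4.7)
The plan is to mirror the proof of Theorem~\ref{thm:main2} almost verbatim, using Theorem~\ref{thm:QU} in place of Theorem~\ref{thm:Q} and the fast evaluation of $Q(\bar x, z)$ established in the preceding subsection in place of Theorem~\ref{thm:Qeval}. Since the authors observe that after Theorem~\ref{thm:QU} the remaining task is purely algebraic and identical to the directed case, I would explicitly invoke this and then walk through the randomization and interpolation steps.

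First, I would reduce the decision problem to a question about $Q(\bar x, z)$. By Lemma~\ref{lem:UEM}, UEE has a solution of weight at most $\ell$ iff there is a perfect matching $M$ of $G_O$ such that $G+EP(M)$ is connected and the total weight of $EP(M)$ is at most $\ell$. By Theorem~\ref{thm:QU} and the definition $x(e_{u,I,v}) = x_{u,I,v} z^r$ with $r$ equal to the (capped at $\ell+1$) weight of $e_{u,I,v}$, the monomials of $Q(\bar x, z)$ are in bijection (over a field of characteristic two) with perfect matchings $M$ of $G_O$ for which $G+EP(M)$ is connected, and the exponent of $z$ in the monomial attached to such an $M$ equals the weight of $EP(M)$ whenever that weight is at most $\ell$ (otherwise the exponent already exceeds $\ell$). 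Thus the instance is positive iff $Q(\bar x, z)$ contains some nonzero monomial whose $z$-degree is at most $\ell$.

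Second, I would detect this monomial probabilistically. Group $Q$ as $Q(\bar x, z) = \sum_{i=0}^{L} \alpha_i(\bar x)\, z^i$ where $L = O(\ell n^2)$ bounds the maximal $z$-degree, and note that each $\alpha_i$ is a polynomial in $\bar x$ of total degree at most $|O|/2$ (since the $x$-terms come from perfect matchings of $G_O$). Instantiating $\bar x$ uniformly from $\mathrm{GF}(2^r)$ for $r = \Omega(\log \ell + \log n)$, the Schwartz--Zippel lemma (Lemma~\ref{lem:SZ}) together with the union bound over $i \le \ell$ guarantees that, with probability at least $1 - 1/\mathrm{poly}(n+\ell)$, the resulting univariate polynomial $p(z)$ has a nonzero coefficient for some $z^i$ with $i \le \ell$ iff the corresponding $\alpha_i \not\equiv 0$. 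Applying Lemma~\ref{lem:polynomial interpolation} with $L+1$ distinct field elements recovers all coefficients of $p(z)$ from $L+1 = \mathrm{poly}(n+\ell)$ evaluations of $Q(\bar x, z)$, and we simply check whether any coefficient with index at most $\ell$ is nonzero.

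Third, each evaluation of $Q(\bar x, z)$ costs $O^*(2^k)$ time by the fast evaluation procedure described at the start of this subsection: compute the distances $d(u,I,v)$ via the same dynamic program used in the directed case (valid because $\omega$ is symmetric), fill in the concrete values $x(e_{u,I,v})$, and use the fast zeta transform of Yates to obtain all matrix entries $A_I(u,v) = \sum_{J \subseteq I} x(e_{u,J,v})$ simultaneously; each Pfaffian $\pf A_I$ and $\pf A_{[k]\setminus I}$ is then computable in polynomial time, giving $O^*(2^k)$ total work per evaluation. The total time and space are therefore $O^*(2^k)$, completing the proof. There is no real obstacle: the only genuinely ``undirected'' ingredient is that Pfaffians over characteristic two enumerate perfect matchings of general graphs in the same way determinants enumerate perfect matchings of bipartite graphs, and this has already been absorbed into Theorem~\ref{thm:QU}, so from that point on the argument is indeed identical to the directed case.
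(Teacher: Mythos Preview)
Your proposal is correct and matches the paper's own argument: the paper explicitly omits the proof of Theorem~\ref{thm:main2U}, noting that after Theorem~\ref{thm:QU} the remaining steps are purely algebraic and identical to the directed case, and you have faithfully reproduced those steps (Lemma~\ref{lem:UEM}, Theorem~\ref{thm:QU}, the Schwartz--Zippel and interpolation lemmas, and the $O^*(2^k)$ evaluation via zeta transform and Pfaffians).
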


\section{Conjoining Bipartite Matching Problem}\label{sec:match}

For a polynomial $P$ and a monomial $M,$ we let $\coef_P M$ denote the coefficient of $M$ in $P.$ The next lemma provides a way to extract from $P$ only monomials divided by a certain term.

\begin{lemma}\label{lem:coef}\cite{Wahl2013}
Let $P(x_1, . . . , x_n)$ be a polynomial over a field of characteristic two, and $T \subseteq [n]$
a set of target indices. For a set $I \subseteq [n],$ define $P_{- I}(x_1,\ldots  , x_n) = P(y_1,\ldots  , y_n),$ where $y_i = 0$
for $i \in I$ and $y_i = x_i$, otherwise. Define
$$Q(x_1,\ldots , x_n) =\sum_{I\subseteq T} P_{- I}(x_1,\ldots , x_n).$$
Then for any monomial $M$ divisible by $\Pi_{i\in T} x_i$ we have $\coef_Q M = \coef_P M,$ and for
every other monomial we have $\coef_Q M = 0.$
\end{lemma}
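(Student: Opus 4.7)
The plan is to prove the identity monomial-by-monomial, by tracking which monomials of $P$ survive the substitution in each $P_{-I}$ and then summing the contributions modulo $2$. For any monomial $M=\prod_{i=1}^n x_i^{a_i}$ of $P$, let $S(M)=\{i\in[n]: a_i>0\}$ denote its support. The key observation is that the substitution $y_i = 0$ for $i\in I$ and $y_i=x_i$ for $i\notin I$ leaves $M$ unchanged if $I\cap S(M)=\emptyset$ and sends it to $0$ otherwise. Hence $M$ contributes $\coef_P M$ to $P_{-I}$ exactly for those $I$ disjoint from $S(M)$.

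Given this, summing over $I\subseteq T$ collapses to a counting argument: the number of $I\subseteq T$ with $I\cap S(M)=\emptyset$ equals $2^{|T\setminus S(M)|}$, because such an $I$ is an arbitrary subset of $T\setminus S(M)$. Thus the total contribution of $M$ to $Q$ is $2^{|T\setminus S(M)|}\cdot \coef_P M$. Working in characteristic two, this factor is $1$ precisely when $T\setminus S(M)=\emptyset$, i.e., when $T\subseteq S(M)$, which is equivalent to $\prod_{i\in T} x_i$ dividing $M$; in all other cases $|T\setminus S(M)|\ge 1$ and the factor vanishes.

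Finally, since each monomial of $Q$ must come from a monomial of $P$ (the substitution only ever zeroes out monomials, never creates them), no new monomials are introduced, and the per-monomial analysis above determines $\coef_Q M$ exactly. The two conclusions of the lemma then read off directly: $\coef_Q M = \coef_P M$ when $\prod_{i\in T} x_i \mid M$, and $\coef_Q M = 0$ otherwise. There is no real obstacle here; the entire argument is a one-shot inclusion--exclusion in characteristic two, with the only substantive step being the reduction $2^{|T\setminus S(M)|}\bmod 2 = [T\subseteq S(M)]$.
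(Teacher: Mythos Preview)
Your proof is correct. Note, however, that the paper does not actually prove this lemma: it is stated with a citation to \cite{Wahl2013} and used as a black box, so there is no in-paper argument to compare against. Your monomial-by-monomial inclusion--exclusion is the standard direct proof, and every step is sound; the only point worth making explicit is that distinct monomials of $P$ cannot collide under the substitution (since setting variables to zero either kills a monomial or leaves it untouched), which you implicitly use when you assert that no new monomials are created and that the per-monomial count determines $\coef_Q M$.
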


We now proceed with the proof of the main result of this subsection.

\begin{theorem}
\label{theorem:cbm fpt}
Let $\ell$ be bounded by a polynomial in the number of vertices in $B$. Then CBM 
can be solved by a randomized algorithm in $O^*(2^{|F|})$ time and polynomial space.
\end{theorem}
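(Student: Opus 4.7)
The plan is to adapt the algebraic sieving framework developed in Section \ref{sec:EE}, using Lemma \ref{lem:coef} as the key new ingredient to enforce the constraints in $F$. Assume the two sides of $B$ have equal size (else no perfect matching exists). For each $ij \in F$ introduce a ``constraint marker'' indeterminate $y_{ij}$, for each edge $e$ introduce an indeterminate $x_e$, and introduce one common weight indeterminate $z$. For an edge $e = uv$ with $u \in V_i$ and $v \in V_j$, set $y(e) = y_{ij}$ if $ij \in F$ and $y(e) = 1$ otherwise, and define the Edmonds-style matrix
\[
A(u,v) = \sum_{e=uv \in E(B)} x_e\, z^{\omega(e)}\, y(e).
\]
Working over a field of characteristic two, signs disappear from the Leibniz formula, so
\[
\det A = \sum_M \Bigl(\prod_{e\in M} x_e\Bigr) z^{\omega(M)} \prod_{ij \in F} y_{ij}^{c_{ij}(M)},
\]
where $M$ ranges over perfect matchings of $B$, $\omega(M)$ is the total weight of $M$, and $c_{ij}(M)$ is the number of edges of $M$ with one endpoint in $V_i$ and the other in $V_j$.

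Next, apply Lemma \ref{lem:coef} with target set $T = \{y_{ij} : ij \in F\}$: define
\[
Q(\bar x, \bar y, z) \; := \; \sum_{I \subseteq F} (\det A)_{-I},
\]
where $(\det A)_{-I}$ denotes $\det A$ with $y_{ij}$ set to zero for every $ij \in I$. By Lemma \ref{lem:coef}, $Q$ retains exactly those monomials of $\det A$ divisible by $\prod_{ij \in F} y_{ij}$, i.e., those coming from matchings $M$ satisfying $c_{ij}(M) \ge 1$ for every $ij \in F$. These are precisely the CBM-feasible matchings (ignoring the weight bound for the moment). Note that divisibility is insensitive to whether $y_{ij}$ appears with multiplicity, so matchings that use several edges across the same pair $\{V_i,V_j\}$ are still counted correctly.

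To finish, decide whether $Q$ contains a monomial of $z$-degree at most $\ell$ by mimicking the end of the proof of Theorem \ref{thm:main2}: group $Q$ as $\sum_i \alpha_i(\bar x,\bar y)\, z^i$, instantiate $\bar x,\bar y$ uniformly at random over a field of characteristic two of size $2^{\Omega(\log(n+\ell))}$ so that, by Schwartz--Zippel (Lemma \ref{lem:SZ}) and a union bound, each $\alpha_i$ preserves its zero/nonzero status with high probability, and use univariate interpolation (Lemma \ref{lem:polynomial interpolation}) in $z$ to recover each coefficient from $\mathrm{poly}(n+\ell)$ evaluations of $Q$. Each evaluation of $Q$ requires computing $2^{|F|}$ polynomial-time determinants, yielding total time $O^*(2^{|F|})$ and polynomial space. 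The main point that needs care is the correctness of the sieving under repeated $y_{ij}$ factors, handled by the divisibility observation above; the \textsc{Conjoining General Matching} variant then follows with no further change by replacing the Edmonds matrix and its determinant with a Tutte matrix and its Pfaffian, exactly as in Section \ref{sec:pfaffian}.
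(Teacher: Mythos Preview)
Your proposal is correct and follows essentially the same approach as the paper: build the Edmonds bipartite matrix with per-edge variables $x_e$, weight marker $z$, and constraint markers $y_{ij}$, apply Lemma~\ref{lem:coef} to sieve out the monomials divisible by $\prod_{ij\in F} y_{ij}$, then recover the low-$z$-degree part via random instantiation and univariate interpolation exactly as in the proof of Theorem~\ref{thm:main2}. The only cosmetic difference is that you allow parallel edges in the entry $A(u,v)$, which is harmless since the distinct $x_e$ variables keep all monomials distinct.
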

\begin{proof}
Introduce a variable $x_e$ for every edge $e$ of $B$, a variable $y_f$ for every $f\in F$, and a variable $z$. Let $U \cup W$ be the bipartition of vertices of $B$. 
Let $n=|U|=|W|$. 
Construct a matrix $A$, with rows indexed by $U$ and columns by $W$, as follows. For $u \in U$ and $w \in W$, we let $y(u,w)=y_{ij}$ if $u \in V_i$ and $w \in V_j$ (or vice versa)
and $ij \in F$ (i.e., if including an edge from $u$ to $w$ would satisfy the request $ij \in F$); otherwise $y(u,w)=1$. 
Now let $A(u,w)=x_{uw} \cdot y(u,w) \cdot z^{\omega(u,w)}$ if $uw \in E$, where $\omega(u,w)$ is the weight of the edge $uw$, and $A(u,w)=0$, otherwise.  
We claim that there is a conjoining matching of total weight at most $\ell$ if and only if
$\det A$, viewed as a polynomial, contains a monomial where the degree of $z$ is at most $\ell$, and where every variable $y_f$, $f \in F$, occurs at least once. 

To prove this claim, observe that the non-zero monomials of $\det A$ are exactly the perfect matchings of $B$
(it is easy to see that all non-zero transversals of $A$ produce distinct monomials, by the use of distinct edge variables $x_e$).
By inspecting the construction, we also find that in each monomial $T$ of $\det A$, corresponding to a matching $M$, 
the degree of $z$ in $T$ is exactly the total weight of $M$, and that $y_{ij}$ occurs in $T$ if and only if $y(u,w)=y_{ij}$ for some $uw \in M$,
i.e., if and only if $M$ contains some edge $uw$ satisfying the request $ij \in F$. (Note that the variables $y_{ij}$ may well have degree more
than one in $T$.) This proves our claim.

It remains to test in the allotted time whether such a monomial exists in $\det A$. We do this in two phases. 
Let $\det A(X,Y,z)$ denote the value of $\det A$ for a given evaluation of variables $x_e$, $y_f$, and $z$. 
Further, for $S \subseteq F$, let $\det A_{-S}(X,Y,z)$ denote $\det A(X,Y',z)$ where $Y'(f)=0$ if $f\in S$ and $Y'(f)=Y(f)$, otherwise. 
For the first phase, we define a secondary polynomial
$
p(X,Y,z) = \sum_{S \subseteq F} \det A_{-S}(X,Y,z);
$
then, by Lemma \ref{lem:coef}, $p(X,Y,z)$ contains exactly those monomials where every $y_f, f \in F$ occurs
(recall that computations are over GF$(2^r)$ for some appropriate $r$).
Thus $p$ enumerates exactly those monomials $T$ of $\det A$ which satisfy all requests in $F$, regardless of solution weight. 
Observe that $p$ can be evaluated in $O^*(2^{|F|})$ time. 

For the second phase, we need to detect whether there is at least one monomial in $p$ where the degree of $z$ is $\ell$ or less. 
This can be done as in the proof of Theorem \ref{thm:main2}.
\end{proof}

Finally, let us remark that the restriction that $B$ is a bipartite graph can be removed, i.e., that the {\sc Conjoining General Matching} (CGM) problem can be solved with very similar methods.
The proof is identical to the one above, except with the Tutte matrix replacing the bipartite adjacency matrix and Pfaffians replacing determinants, as in Section~\ref{sec:pfaffian}. The details are omitted. 

\begin{theorem}
\label{theorem:cgm fpt}
The {\sc Conjoining General Matching} problem can be solved in randomized time $O^*(2^{|F|})$ and polynomial space. 
\end{theorem}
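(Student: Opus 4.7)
The plan is to mirror the proof of Theorem \ref{theorem:cbm fpt} almost verbatim, replacing the bipartite adjacency/determinant machinery with the Tutte matrix/Pfaffian machinery outlined in Section \ref{sec:pfaffian}. Throughout, I would work over a field of characteristic two (specifically $\mathrm{GF}(2^r)$ for suitably large $r = \Omega(\log |F| + \log \ell + \log n)$), so that the Pfaffian of a skew-symmetric matrix with zero diagonal cleanly enumerates perfect matchings as $\pf A = \sum_M \prod_{e \in M} x_e$.

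First, I would introduce indeterminates $x_e$ for every edge $e$ of the input graph $B$, an indeterminate $y_f$ for every $f \in F$, and a single indeterminate $z$. Assume $|V(B)|$ is even (otherwise no perfect matching exists). Fix an arbitrary total order on $V(B)$ and define a $|V(B)| \times |V(B)|$ matrix $A$ by $A(u,v) = x_{uv} \cdot y(u,v) \cdot z^{\omega(u,v)}$ when $uv \in E(B)$, and $A(u,v) = 0$ otherwise, where $y(u,v) = y_{ij}$ if the endpoints satisfy $u \in V_i$, $v \in V_j$ with $ij \in F$, and $y(u,v) = 1$ otherwise. Since we are in characteristic two, $A$ is automatically skew-symmetric with zero diagonal, so $\pf A$ is well-defined and can be computed in polynomial time for any instantiation.

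Second, exactly as in the bipartite case, each nonzero monomial of $\pf A$ corresponds to a unique perfect matching $M$ of $B$ (the distinct $x_e$ variables prevent cancellation or collision across matchings), the exponent of $z$ in that monomial equals the weight $\omega(M)$, and the variable $y_{ij}$ appears in the monomial if and only if $M$ contains at least one edge whose endpoints realize the request $ij$. Thus the existence of a conjoining matching of weight at most $\ell$ is equivalent to the existence of a monomial in $\pf A$ that is divisible by $\prod_{f \in F} y_f$ and in which the degree of $z$ is at most $\ell$. To isolate monomials divisible by $\prod_{f \in F} y_f$, I would apply Lemma \ref{lem:coef} with target set $T = F$, forming
\[
p(X, Y, z) := \sum_{S \subseteq F} (\pf A)_{-S}(X, Y, z),
\]
which, over characteristic two, retains exactly those monomials in which every $y_f$ appears. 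Then, as in the last part of the proof of Theorem \ref{thm:main2}, I would instantiate $X$ and $Y$ at random from $\mathrm{GF}(2^r)$, interpolate the resulting univariate polynomial in $z$ using Lemma \ref{lem:polynomial interpolation} from $O(\ell\cdot\mathrm{poly}(n)) + 1$ evaluations, and test whether any of the first $\ell + 1$ coefficients is nonzero; correctness with high probability follows from Lemma \ref{lem:SZ} and the union bound.

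The running time decomposes as follows: each evaluation of $\pf A_{-S}$ takes polynomial time, the outer sum over $S \subseteq F$ contributes the factor $2^{|F|}$, and the interpolation layer contributes only polynomial overhead since $\ell$ is polynomially bounded. Space stays polynomial because the sieve is computed as a running sum and the $z$-values are processed one by one. The only mildly delicate point, and the one I would double-check carefully, is the uniqueness argument for monomials: in the Pfaffian, the term for a perfect matching $M$ is $\prod_{e \in M} A(e)$ possibly with a sign, but signs vanish in characteristic two, and the presence of distinct $x_e$ variables guarantees that different perfect matchings contribute distinct monomials and that $y$-degree and $z$-degree read off correctly. Given this, the proof goes through essentially without change from the bipartite case.
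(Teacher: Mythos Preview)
Your proposal is correct and matches the paper's approach exactly: the paper explicitly states that the proof is identical to that of Theorem~\ref{theorem:cbm fpt}, with the Tutte matrix replacing the bipartite adjacency matrix and Pfaffians replacing determinants as in Section~\ref{sec:pfaffian}, and omits the details. Your write-up fills in precisely those details, including the characteristic-two Pfaffian enumeration of perfect matchings, the sieve via Lemma~\ref{lem:coef}, and the Schwartz--Zippel/interpolation finish from Theorem~\ref{thm:main2}.
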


\section{Discussion}\label{sec:dis}

Our results on {\sc DRPP} parameterized by $k$ and CBM parameterized by $|F|$, raise two natural questions. 
\begin{itemize}
\item Is it possible to get rid of the conditions bounding $\ell$ by a polynomial in the number of vertices in the corresponding graph? It seems that this question is not easy to answer as the complexity of many algorithms using algebraic methods heavily depends on a condition similar to ours, see, e.g., \cite{CyGaSa2012,MaPi}. 

\item Can we get rid of randomness in the algorithms? This seems to be a difficult
question. For some of the problems that can be solved by the randomized algebraic techniques
of \cite{MuVaVa1987}, no deterministic polynomial-time algorithms have been found, despite significant efforts.
\end{itemize}



\end{document}